\def\text#1{\textrm{#1}}
\def\precond#1{{\vphantom{#1}}^\bullet #1}
\def\postcond#1{{#1}^\bullet}
\def\production#1{\stackrel{#1}{\longrightarrow}}
\newfont{\fsc}{eusm10 scaled 1100}      
\def\powermultiset#1{\bbbn^{#1}}
\def\implies{\Rightarrow}
\def\mathrlap{\mathpalette\mathrlapinternal}
\def\mathrlapinternal#1#2{%
  \rlap{$\mathsurround=0pt#1{#2}$}}
\def\mathllap{\mathpalette\mathllapinternal}
\def\mathllapinternal#1#2{%
  \llap{$\mathsurround=0pt#1{#2}$}}
\def\trail#1{\text{~#1}}
\def\into{\rightarrow}
\def\epsilon{\varepsilon}
\def\defitem#1{\emph{#1}}
\def\AA{\text{\it AA}}
\let\origexists\exists
\let\orignexists\nexists
\let\origforall\forall
\def\quantorSpace{}
\def\exists#1.{\quantorSpace\origexists\def\quantorSpace{\,}#1.\onespace}
\def\nexists#1.{\quantorSpace\orignexists\def\quantorSpace{\,}#1.\onespace}
\def\forall#1.{\quantorSpace\origforall\def\quantorSpace{\,}#1.\onespace}
\def\onespace#1{\let\argument=#1\ifx\onespace#1\else~\fi\argument}
\let\origmin\min
\def\min{\mathord{\origmin}}
\let\origmax\max
\def\max{\mathord{\origmax}}
\def\quireunderscore{_}
\def\quire#1{%
  \def\tmp{#1}%
  \ifx\tmp\quireunderscore%
    \def\tmp{\quireindexed_}
  \else%
    \def\tmp{\mathscr{Q}#1}
  \fi\tmp}
\def\quireindexed_#1{\mathscr{Q}_{\text{#1}}}
\newtheorem{observation}{Observation}
\newcommand{\refdf}[1]{\definitionname~\ref{df-#1}}
\newcommand{\refpr}[1]{\propositionname~\ref{pr-#1}}
\newcommand{\refthm}[1]{\theoremname~\ref{thm-#1}}
\newcommand{\refcor}[1]{\corollaryname~\ref{cor-#1}}
\newcommand{\reflem}[1]{\lemmaname~\ref{lem-#1}}
\newcommand{\reffig}[1]{\figurename~\ref{fig-#1}}
\newcommand{\refsec}[1]{Section~\ref{sec-#1}}
\def\goesto{\@transition\rightarrowfill}
\def\Goesto{\@transition\Rightarrowfill}
\def\ngoesto{\@transition\nrightarrowfill}
\def\nGoesto{\@transition\nRightarrowfill}
\def\@transition#1{\@ifnextchar[{\@@transition{#1}}{\@@transition{#1}[]}}
\newbox\@transbox
\newbox\@arrowbox
\def\rightarrowfill{$\m@th\mathord-\mkern-6mu%
  \cleaders\hbox{$\mkern-2mu\mathord-\mkern-2mu$}\hfill
  \mkern-6mu\mathord\rightarrow$}
\def\Rightarrowfill{$\m@th\mathord=\mkern-6mu%
  \cleaders\hbox{$\mkern-2mu\mathord=\mkern-2mu$}\hfill
  \mkern-6mu\mathord\Rightarrow$}
\def\@@transition#1[#2]%
\wd\@transbox{#1}
\@transbox\hbox{$\mathop{\box\@arrowbox}\limits^{\box\@transbox}$}
\def\alignedcaption[#1&#2]{\mbox{\scriptsize $\mathllap{#1{}}\mathrlap{#2}$}}
\def\ie{i.e.\ }
\def\restrictedto{\mathop\upharpoonright}
\newcommand{\plat}[1]{\raisebox{0pt}[0pt][0pt]{#1}}   
\newcommand{\inp}{\mathbin\in}                        
\def\idx#1#2#3#4#5{
  \def\argone{#1}
  \def\argtwo{#2}
  \def\argthree{#3}
  \def\argfour{#4}
  \def\argfive{#5}
  \def\testprime{'}
  \def\testdprime{''}
  \def\testtprime{'''}
  {\vphantom{\argthree}}_%
    {\vphantom{\argfour}\argone}^%
    {\vphantom{\argfive}{\argtwo}}%
  \argthree_%
    {\vphantom{\argone}\argfour}%
    \ifx\argfive\testprime\argfive\else%
    \ifx\argfive\testdprime\argfive\else%
    \ifx\argfive\testtprime\argfive\else%
    ^{\vphantom{\argtwo}\argfive}\fi\fi\fi%
}
\newcommand{\monus}{\mathrel{\raisebox{-0pt}[0pt][0pt]{$
                      \stackrel{\raisebox{-5pt}[0pt][0pt]{\huge$\cdot$}}
                               {\raisebox{0pt}[0pt][0pt]{$-$}}$}}}
\def\swap{\mbox{swap}}
\def\swapeq{\equiv_1}
\def\swapeqi{\equiv_{1}}
\newcommand{\adjacent}{\equiv_0}
\def\connectedto{\adjacent^*}
\def\Lin{\text{Lin}}
\def\FS{\text{FS}}
\newenvironment{itemise}{\begin{list}{$\bullet$}{\leftmargin 12pt \labelwidth\leftmargin\advance\labelwidth-\labelsep \topsep 4pt \itemsep 2pt \parsep 2pt}}{\end{list}}
\newenvironment{itemisei}{\begin{list}{$-$}{\leftmargin 12pt \labelwidth\leftmargin\advance\labelwidth-\labelsep \topsep 4pt \itemsep 2pt \parsep 2pt}}{\end{list}}
\def\justempty{}
\newenvironment{define}[1]{\begin{definition}\rm\def\arga{#1}\ifx\justempty\arga~\else#1\fi\\\vspace{-6ex}\\\mbox{~}\begin{itemise}}{\end{itemise}\end{definition}}
\newenvironment{defineCite}[2]{\begin{definition}[#1]\rm\def\arga{#2}\ifx\justempty\arga~\else#2\fi\\\vspace{-6ex}\\\mbox{~}\begin{itemise}}{\end{itemise}\end{definition}}
\DeclareFontFamily{T1}{la}{}
\DeclareFontShape{T1}{la}{m}{n}{<->s*[0.8571428571]la14}{}
\def\processfont#1{\text{\fsc #1}}
\def\NN{\processfont{N}}
\def\SS{\processfont{S}}
\def\TT{\processfont{T}}
\def\FF{\processfont{F}}
\def\MM{\processfont{M}}
\def\PP{P}
\def\QQ{Q}
\newcommand{\pos}{\text{pos}}
\newcommand{\GR}{{\rm GR}(N)}
\newcommand{\fGR}{{\rm GR}_{\it fin}(N)}
\newcommand{\fFS}{{\rm FS}(N)}
\newcommand{\FSinf}{{\rm FS}^\infty}
\newcommand{\wFS}{\FSinf(N)}
\newcommand{\n}{\mathrel{_n\!\!=}}
\def\@linkborderhighlight{pdfRedHighlight}
\def\@citeborderhighlight{pdfGreenHighlight}
\def\@urlborderhighlight{pdfCyanHighlight}
\def\hyper@linkend{%
  \literalps@out{\strip@pt@and@otherjunk\baselineskip\space H.L}%
  \edef\Hy@tempcolor{%
    \csname @\hyper@currentlinktype borderhighlight\endcsname
  }%
  \pdfmark{%
    pdfmark=/ANN,%
    linktype=link,%
    Subtype=/Link,%
    PDFAFlags=4,%
    Dest=\hyper@currentanchor,%
    AP={%
      <</N {\Hy@tempcolor}>>%
    },%
    Border=0 0 0,%
    Raw=H.B%
  }%
  \Hy@endcolorlink
  \ifHy@breaklinks
  \else
    \Hy@LinkMath
    \Hy@SaveSpaceFactor
    \egroup
    \Hy@RestoreSpaceFactor
  \fi
}%
\def\hyper@link#1#2#3{%
  \Hy@VerboseLinkStart{#1}{#2}%
  \edef\Hy@tempcolor{\csname @#1borderhighlight\endcsname}%
  \begingroup
    \protected@edef\Hy@testname{#2}%
    \ifx\Hy@testname\@empty
      \Hy@Warning{%
        Empty destination name,\MessageBreak
        using `\Hy@undefinedname'%
      }%
      \let\Hy@testname\Hy@undefinedname
    \fi
    \pdfmark[{#3}]{%
      linktype={#1},%
      Border=0 0 0,%
      pdfmark=/ANN,%
      Subtype=/Link,%
      PDFAFlags=4,%
      AP={%
        <</N {\Hy@tempcolor}>>%
      },%
      Dest=\Hy@testname
    }%
  \endgroup
}
\def\hyper@linkurl#1#2{%
  \begingroup
    \Hy@pstringdef\Hy@pstringURI{#2}%
    \hyper@chars
    \leavevmode
    \pdfmark[{#1}]{%
      pdfmark=/ANN,%
      linktype=url,%
      Border=0 0 0,%
      Action={<<%
        /Subtype/URI%
        /URI(\Hy@pstringURI)%
        \ifHy@href@ismap
          /IsMap true%
        \fi
      >>},%
      Subtype=/Link,%
      AP={%
        <</N {\@urlborderhighlight}>>%
      },%
      PDFAFlags=4%
    }%
  \endgroup
}
\newcounter{netimage}
\def\p#1:#2;{\cnode #1{0.3}{n\thenetimage-#2}}
\def\P#1:#2;{\p #1:#2;\pscircle*#1{0.1}}
\def\q#1:#2:#3;{\p #1:#2;\rput#1{\rput[l](0.45,0){\large\it #3}}}
\def\Q#1:#2:#3;{\P #1:#2;\rput#1{\rput[l](0.45,0){\large\it #3}}}
\def\qq#1:#2:#3;{\p #1:#2;\rput#1{\rput[t](0,-0.5){\large\it #3}}}
\def\ql#1:#2:#3;{\p #1:#2;\rput#1{\rput[r](-0.45,0){\large\it #3}}}
\def\qt#1:#2:#3;{\p #1:#2;\rput#1{\rput[b](0,0.45){\large\it #3}}}
\def\Qt#1:#2:#3;{\P #1:#2;\rput#1{\rput[b](0,0.45){\large\it #3}}}
\def\Ql#1:#2:#3;{\P #1:#2;\rput#1{\rput[r](-0.45,0){\large\it #3}}}
\def\qx#1:#2:#3:#4;{\p #1:#2;\rput#1{\rput#4{\large\it #3}}}
\def\QXX#1:#2:#3:#4:#5;{\p #1:#2;\rput#1{\rput#4{\large\it #3}}\pscircle*#5{0.1}}
\def\s#1:#2:#3;{\p #1:#2;\rput#1{\rput(-0.03,0){\large\it #3}}}
\def\PNsr#1:#2:#3:#4;{\p #1:#2;\rput#1{\rput(-0.03,0){\large\it #3}\rput[l](0.45,0){\large\it #4}}}
\def\PNsl#1:#2:#3:#4;{\p #1:#2;\rput#1{\rput(-0.03,0){\large\it #3}\rput[r](-0.45,0){\large\it #4}}}
\def\t#1:#2:#3;{\rput#1{\rnode{n\thenetimage-#2}{\psframebox{%
  \vbox to 0.6cm{\vfil\hbox to 0.6cm{\hfil\Large\it #3\hfil}\vfil}}}}}
\def\u#1:#2:#3:#4;{\rput#1{\rnode{n\thenetimage-#2}{\psframebox{%
  \vbox to 0.6cm{\vfil\hbox to 0.6cm{\hfil\Large\it #3\hfil}\vfil}}}}%
  \rput#1{\rput[l](0.6,0){\large\it #4}}}
\def\ut#1:#2:#3:#4;{\rput#1{\rnode{n\thenetimage-#2}{\psframebox{%
  \vbox to 0.6cm{\vfil\hbox to 0.6cm{\hfil\Large\it #3\hfil}\vfil}}}}%
  \rput#1{\rput[b](0,0.6){\large\it #4}}}
\def\ul#1:#2:#3:#4;{\rput#1{\rnode{n\thenetimage-#2}{\psframebox{%
  \vbox to 0.6cm{\vfil\hbox to 0.6cm{\hfil\Large\it #3\hfil}\vfil}}}}%
  \rput#1{\rput[r](-0.6,0){\large\it #4}}}
\def\a#1->#2;{\ncline{->}{n\thenetimage-#1}{n\thenetimage-#2}}
\def\A#1->#2;{\ncarc[arcangle=16]{->}{n\thenetimage-#1}{n\thenetimage-#2}}
\def\AA#1->#2;{\ncarc[arcangle=32]{->}{n\thenetimage-#1}{n\thenetimage-#2}}
\def\B#1->#2;{\ncarc[arcangle=-8]{->}{n\thenetimage-#1}{n\thenetimage-#2}}
\def\avlinearc{0.2}
\def\av#1[#2]-#3->[#4]#5;{
  \SpecialCoor
  \psline[linearc=\avlinearc]{->}([angle=#2]n\thenetimage-#1)#3([angle=#4]n\thenetimage-#5)
}
\long\def\petrinet(#1)#2\end{\psscalebox{0.7}{\pspicture(#1)\stepcounter{netimage}#2\endpspicture}\end}
\titlerunning{Abstract Processes in the Absence of Conflicts in General P/T-Systems}
\title{Abstract Processes in the Absence of Conflicts\texorpdfstring{\newline}{}
 in General Place/Transition Systems%
\texorpdfstring{\thanks{This work was
partially supported by the DFG (German Research Foundation).}}{}}
\authorrunning{R.J. van Glabbeek, U. Goltz \& J.-W. Schicke-Uffmann}
\author{
  Rob van Glabbeek \inst{1,2} \and
  Ursula Goltz \inst{3} \and
  Jens-Wolfhard Schicke-Uffmann \inst{3}
}
\institute{
  Data61, CSIRO, Sydney, Australia
  \and
  School of Comp.\ Sc.\ and Engineering, Univ.\ of New South Wales,
  Sydney, Australia
  \and
  Institute for Programming and Reactive Systems, TU Braunschweig, Germany
}
\begin{document}

\maketitle
\setcounter{footnote}{0}

\begin{abstract}
Goltz and Reisig generalised Petri's concept of \emph{processes} of one-safe Petri nets
to general nets where places carry multiple tokens.
\emph{BD-processes} are equivalence classes of Goltz-Reisig processes connected through the
swapping transformation of Best and Devillers; they can be considered as an alternative
representation of runs of nets.
Here we present an order respecting bijection between the BD-processes and the \emph{FS-processes}
of a countable net, the latter being defined---in an analogous way---as equivalence classes of
firing sequences. 
Using this, we show that a countable net without binary conflicts has a (unique) largest BD-process.
\end{abstract}

\section{Introduction}

For the basic class of Petri nets, the \emph{condition/event systems}, there is a well
established notion of \emph{process} \cite{petri77nonsequential}, modelling runs of the represented system.
This paper continues the adaptation of this notion of process to general
\emph{place/transition systems} (\emph{P/T systems}).

Goltz and Reisig proposed a notion of process for P/T systems which is rather
discriminating~\cite{goltz83nonsequential}. Depending on which of several ``identical'' tokens you
choose for firing a transition, you may get different processes, with different causal
dependencies. We call this notion a \emph{GR-process}.

\hspace{-2pt}Best and Devillers \cite{best87both} defined a swapping transformation on GR-processes
that identifies GR-processes differing only in the choice which token was removed from a place. 
They proposed an equivalence notion $\equiv_1^\infty$ on GR-processes, where
$\equiv_1^\infty$-equivalent processes intuitively can be converted into each
other through `infinitely many' swapping transformations.
We address an $\equiv_1^\infty$-equivalence class of GR-processes as a \emph{BD-process}.

GR-processes can be seen as an unsatisfactory formalisation of the intuitive concept of a run, since
there exist conflict-free\footnote{Intuitively, a \emph{conflict} denotes any situation in which
  there is a choice to resolve.\\ A formalisation of this notion \cite{goltz86howmany,GGS19} occurs in \refsec{conflict}.}
systems with multiple maximal GR-processes.
We refer to \cite{glabbeek11ipl,GGS19} and the many
references therein for an example and further discussion.
On the other hand,
BD-processes can be seen as unsatisfactory, because there exist systems which do have conflicts, yet
still have a unique maximal BD-process. To illustrate this result, we recall in
\reffig{badswapping} an example due to Ochma\'nski
\cite{ochmanski89personal} --- see also
\cite{DMM89,glabbeek11ipl}. In the initial situation only two of the three
enabled transitions can fire, which constitutes a conflict.  However, the
equivalence $\equiv_1^\infty$ obtained from the swapping transformation (formally defined in
\refsec{semantics}) identifies all possible maximal GR-processes---two of which are shown here---and hence
yields only one complete abstract run of the system.  We are not aware of a
solution, i.e.\ any formalisation of the concept of a run of a net
that allows only one complete run for a conflict-free net
but allocates multiple complete runs to the net of \reffig{badswapping}.

\begin{figure}[t]
\vspace*{1em}
  \begin{center}
    \psscalebox{0.9}{
    \begin{petrinet}(18.95,7)
      \Ql(0.5,5):pa:2;
      \Ql(2,5):pb:3;
      \Q(4,5):pc:4;
      \Q(6.5,5):pd:5;

      \ul(1,4):a::a;
      \ul(2.5,4):b::b;
      \u(4,4):c::c;
      \u(6,4):d::d;

      \ql(2.5,6):p:1;
      \ql(2.5,2):q:6;

      \av p[210]-(1,5)->[90]a; \a pa->a; \a a->q;
      \a p->b; \a pb->b; \a b->q;
      \a p->c; \a pc->c; \a c->q;
      \av q[0]-(6,2)->[270]d; \a pd->d; \av d[90]-(6,6)->[0]p;

      \pscircle*(2.38,6){0.1}
      \pscircle*(2.62,6){0.1}
    \end{petrinet}
    }
    \psline[linestyle=dotted](-1.25,0)(-1.25,5)
    \rput(0.5,3.5){\psscalebox{0.75}{
    \begin{petrinet}(7,12)
      \s(1,8.5):fiveA:5;
      \s(2,8.5):twoA:2;
      \s(3,8.5):oneA:1;
      \s(4,8.5):threeA:3;
      \PNsr(5,8.5):oneB:1:$q$;
      \s(6.5,8.5):fourA:4;

      \t(2.5,7):aA:a;
      \t(4.5,7):bA:b;
      
      \s(2.5,5.5):sixA:6;
      \s(4.5,5.5):sixB:6;

      \t(2,4):dA:d;

      \PNsl(2,2.5):oneC:1:$p$;

      \t(5,1.5):cA:c;

      \s(5,0):sixC:6;

      \a twoA->aA; \a oneA->aA;
      \a threeA->bA; \a oneB->bA;
      \a aA->sixA;
      \a bA->sixB;
      \a sixA->dA; \a fiveA->dA;
      \a dA->oneC;
      \a oneC->cA; \a fourA->cA;
      \a cA->sixC;
    \end{petrinet}
    }}
    \psline[linestyle=dotted](2.5,0)(2.5,5)
    \rput(4.2,3.5){\psscalebox{0.75}{
    \begin{petrinet}(7,12)
      \s(1,8.5):fiveA:5;
      \s(2,8.5):twoA:2;
      \s(3,8.5):oneA:1;
      \s(4,8.5):threeA:3;
      \PNsr(5,8.5):oneB:1:$q$;
      \s(6.5,8.5):fourA:4;

      \t(2.5,7):aA:a;
      \t(4.5,1.5):bA:b;
      
      \s(2.5,5.5):sixA:6;
      \s(5.5,5.5):sixB:6;

      \t(2,4):dA:d;

      \PNsl(2,2.5):oneC:1:$p$;

      \t(5.5,7):cA:c;

      \s(4.5,0):sixC:6;

      \a twoA->aA; \a oneA->aA;
      \a threeA->bA; \a oneC->bA;
      \a aA->sixA;
      \a cA->sixB;
      \a sixA->dA; \a fiveA->dA;
      \a dA->oneC;
      \a oneB->cA; \a fourA->cA;
      \a bA->sixC;
    \end{petrinet}
    }}
  \end{center}
  \vspace{-6ex}
  \begin{center}
  \caption{A net together with two of its maximal GR-processes,}
     \quad which are identified by swapping equivalence.
  \end{center}
  \label{fig-badswapping}\vspace{-5ex}
\end{figure}

In \cite{glabbeek11ipl,GGS19} we propose a
subclass of P/T systems, called \emph{structural conflict nets}, more general then the well-known
class of \emph{safe} nets. On these nets BD-processes are a good formalisation of runs, for
we showed that a structural conflict net has a largest BD-process if and only if the net is conflict-free.

The question remains what happens for general P/T systems.
As we have illustrated above, systems with conflicts may still have one largest BD-process.
In this paper we will show that the ``if'' part of the above-mentioned correspondence also holds
for general countable P/T systems: a countable conflict-free P/T-system has a largest BD-process.%
\footnote{In fact, we present a slightly stronger result, namely that a countable P/T-system
  without \emph{binary} conflicts has a largest BD-process. We also give a counterexample showing
  that this stronger result needs the restriction to \emph{countable} P/T systems.
  We do not know whether each uncountable P/T-system without any conflicts has a largest BD-process.}

However, it turns out that the proof of this result is much more complicated than the special case
for structural conflict nets established in \cite{GGS19}.\footnote{The proof of \cite[Theorem~2]{GGS19},
  creating a largest BD-process for any given structural conflict net, does not generalise
  beyond structural conflict nets. We did not find a better method for this generalisation than via
  the detour of FS-processes, as described below.}

Best and Devillers \cite{best87both} defined a swapping transformation also on the firing sequences of a
net, allowing two adjacent transitions to be swapped if they can be fired concurrently.
They proposed an equivalence notion $\equiv_0^\infty$ on firing sequences, where
$\equiv_0^\infty$-equivalent firing sequences intuitively can be converted into each
other through `infinitely many' swaps.
We address an $\equiv_0^\infty$-equivalence class of firing sequences as an \emph{FS-process}.
Best and Devillers established a bijective correspondence between the BD-processes and the
FS-processes of a countable net.
Here we consider the natural preorders $\sqsubseteq_1^\infty$ on GR-processes and
$\sqsubseteq_0^\infty$ on firing sequences with kernels $\equiv_1^\infty$ and $\equiv_0^\infty$.
They induce partial orders (also denoted $\sqsubseteq_1^\infty$ and $\sqsubseteq_0^\infty$)
on BD-processes and FS-processes respectively. In \refsec{firing sequences} we prove that 
the bijective correspondence between the BD-processes and the
FS-processes of a countable net respects this order, so that a countable net has a largest BD-process iff it
has a largest FS-process. This result is interesting in its own right. Additionally we use it as a
stepping stone for obtaining our main result discussed above, by showing that
a countable conflict-free P/T-system has a largest FS-process.

The results of this paper appeared already in our technical report
\cite{GGS11b}, although formulated and proven differently, since there we
didn't have the preorder $\sqsubseteq_1^\infty$, introduced
in \cite{GGS19}. Our revised proofs are conceptually simpler, as they
avoid the auxiliary concepts of BD-runs and FS-runs.

\section[Place/transition systems]{Place/transition systems\footnotemark}\label{sec-basic}
 \footnotetext{The material in Sections \ref{sec-basic},~\ref{sec-GR} and~\ref{sec-conflict} follows closely
   the presentation in \cite{glabbeek11ipl}, but needs to be included to make the paper self-contained.}

\noindent
We will employ the following notations for multisets.

\begin{define}{
  Let $X$ be a set.
}\label{df-multiset}
\item A {\em multiset} over $X$ is a function $A\!:X \rightarrow \bbbn$,
i.e.\ $A\in \powermultiset{X}\!\!$.
\item $x \in X$ is an \defitem{element of} $A$, notation $x \in A$, iff $A(x) > 0$.
\item For multisets $A$ and $B$ over $X$ we write $A \subseteq B$ iff
 \mbox{$A(x) \leq B(x)$} for all $x \inp X$;
\\ $A\cup B$ denotes the multiset over $X$ with $(A\cup B)(x):=\text{max}(A(x), B(x))$,
\\ $A\cap B$ denotes the multiset over $X$ with $(A\cap B)(x):=\text{min}(A(x), B(x))$,
\\ $A + B$ denotes the multiset over $X$ with $(A + B)(x):=A(x)+B(x)$,
\\ $A - B$ is given by
$(A - B)(x):=A(x)\monus B(x)=\mbox{max}(A(x)-B(x),0)$, and
for $k\inp\bbbn$ the multiset $k\cdot A$ is given by
$(k \cdot A)(x):=k\cdot A(x)$.
\item The function $\emptyset\!:X\rightarrow\bbbn$, given by
  $\emptyset(x):=0$ for all $x \inp X$, is the \emph{empty} multiset over $X$.
\item If $A$ is a multiset over $X$ and $Y\subseteq X$ then
  $A\restrictedto Y$ denotes the multiset over $Y$ defined by
  $(A\restrictedto Y)(x) := A(x)$ for all $x \inp Y$.
\item The cardinality $|A|$ of a multiset $A$ over $X$ is given by
  $|A| := \sum_{x\in X}A(x)$.
\item A multiset $A$ over $X$ is \emph{finite}
  iff $|A|<\infty$, i.e.,
  iff the set $\{x \mid x \inp A\}$ is finite.
\item A function $\pi: X \rightarrow Y$ extends to multisets $A \in \powermultiset{X}$ by
  \plat{$\displaystyle\pi(A)(y) = \!\sum_{y = \pi(x)}\!A(x)$}. In this paper, this sum
  will always turn out to be finite.
\end{define}
Two multisets $A\!:X \rightarrow \bbbn$ and
$B\!:Y\rightarrow \bbbn$
are \emph{extensionally equivalent} iff
$A\restrictedto (X\cap Y) = B\restrictedto (X\cap Y)$,
$A\restrictedto (X\setminus Y) = \emptyset$, and
$B \restrictedto (Y\setminus X) = \emptyset$.
In this paper we often do not distinguish extensionally equivalent
multisets. This enables us, for instance, to use $A \cup B$ even
when $A$ and $B$ have different underlying domains.
With $\{x,x,y\}$ we will denote the multiset over $\{x,y\}$ with
$A(x)\mathbin=2$ and $A(y)\mathbin=1$, rather than the set $\{x,y\}$ itself.
A multiset $A$ with $A(x) \leq 1$ for all $x$ is
identified with the set $\{x \mid A(x)=1\}$.

Below we define place/transition systems as net structures with an initial marking.
In the literature we find slight variations in the definition of \mbox{P\hspace{-1pt}/T}
systems concerning the requirements for pre- and postsets of places
and transitions. In our case, we do allow isolated places. For
transitions we allow empty postsets, but require at least one
preplace, thus avoiding problems with infinite self-concurrency.
Moreover, following \cite{best87both}, we restrict attention
to nets of \defitem{finite synchronisation}, meaning that each
transition has only finitely many pre- and postplaces.
Arc weights are included by defining the flow relation as a function to the natural numbers.
For succinctness, we will refer to our version of a \mbox{P\hspace{-1pt}/T} system as a \defitem{net}.

\begin{define}{}\label{df-nst}
\item[]
  A \defitem{net} is a tuple
  $N = (S, T, F, M_0)$ where
  \begin{itemise}
    \item $S$ and $T$ are disjoint sets (of \defitem{places} and \defitem{transitions}),
    \item $F: ((S \mathord\times T) \mathrel\cup (T \mathord\times S)) \rightarrow \bbbn$
      (the \defitem{flow relation} including \defitem{arc weights}), and
    \item $M_0 : S \rightarrow \bbbn$ (the \defitem{initial marking})
  \end{itemise}
  such that for all $t \inp T$ the set $\{s\mid F(s, t) > 0\}$ is
  finite and non-empty, and the set $\{s\mid F(t, s) > 0\}$ is finite.
\end{define}

\noindent
Graphically, nets are depicted by drawing the places as circles and
the transitions as boxes. For $x,y \inp S\cup T$ there are $F(x,y)$
arrows (\defitem{arcs}) from $x$ to $y$.\footnote{This is a presentational alternative for the
  common approach of having at most one arc from $x$ to $y$, labelled with the \emph{arcweight}
  $F(x,y) \in \bbbn$.}  When a net represents a
concurrent system, a global state of this system is given as a
\defitem{marking}, a multiset of places, depicted by placing $M(s)$
dots (\defitem{tokens}) in each place $s$.  The initial state is
$M_0$.

\begin{define}{
  Let $N\!=\!(S, T, F, M_0)$ be a net and $x\inp S\cup T$.
}\label{df-preset}
\item[]
The multisets $\precond{x},~\postcond{x}: S\cup T \rightarrow
\bbbn$ are given by $\precond{x}(y)=F(y,x)$ and
$\postcond{x}(y)=F(x,y)$ for all $y \inp S \cup T$.
If $x\in T$, the elements of $\precond{x}$ and $\postcond{x}$ are
called \emph{pre-} and \emph{postplaces} of $x$, respectively.
These functions extend to finite multisets
$X{:}\, S \cup T \rightarrow\bbbn$ as usual, by
$\precond{\!X} \mathbin{:=} \sum_{x \in S \cup T}X(x)\cdot\precond{x}$ and
$\postcond{X} \mathbin{:=} \sum_{x \in S \cup T}X(x)\cdot\postcond{x}\!$.
\end{define}
The system behaviour is defined by the possible moves between
markings $M$ and $M'$, which take place when a finite multiset $G$ of
transitions \defitem{fires}.  When firing a transition, tokens on
preplaces are consumed and tokens on postplaces are created, one for
every incoming or outgoing arc of $t$, respectively.  Obviously, a
transition can only fire if all necessary tokens are available in $M$
in the first place. \refdf{firing} formalises this notion of behaviour.

\begin{define}{
  Let $N \mathbin= (S, T, F, M_0)$ be a net,
  $G \in \bbbn^T\!$, $G$ non-empty and finite, and $M, M' \in \bbbn^S\!$.
}\label{df-firing}
\item[]
$G$ is a \defitem{step} from $M$ to $M'$,
written $M\production{G}_N M'$, iff
\begin{itemise}
  \item $^\bullet G \subseteq M$ ($G$ is \defitem{enabled}) and
  \item $M' = (M - \mbox{$^\bullet G$}) + G^\bullet$. 
\end{itemise}
We may leave out the subscript $N$ if clear from context.
For a word $\sigma = t_1t_2\ldots t_n \in T^*$
we write $M\production{\sigma} M'$ for\vspace{-5pt}
$$
\exists M_1, M_2, \ldots, M_{n-1}.
M\!\production{\{t_1\}}\! M_1\!\production{\{t_2\}}\! M_2 \cdots M_{n-1}\!\production{\{t_n\}}\! M'\!\!.
$$
When omitting $\sigma$ or $M'$ we always mean it to be existentially quantified.
Likewise, for an infinite word $\sigma = t_1t_2t_3\ldots \in T^\omega$
we write $M\production{\sigma}$ for\vspace{-5pt}
$$
\exists M_1, M_2, \ldots.
M\!\production{\{t_1\}}\! M_1\!\production{\{t_2\}}\!
M_2\!\production{\{t_3\}}\! \cdots.
$$
When $M_0 \production{\sigma}_N$, the word $\sigma \in T^*\cup T^\omega$
is called a \defitem{firing sequence} of $N$.
The set of all firing sequences of $N$ is denoted by $\wFS$,
and the subset of finite firing sequences of $N$ is denoted by $\fFS$.
\end{define}

\noindent
Note that steps are (finite) multisets, thus allowing self-concurrency.
Also note that $M\goesto[\{t,u\}]$ implies $M\goesto[tu]$ and $M\goesto[ut]$.
We use the notation $t\in \sigma$ to indicate that the transition $t$
occurs in the sequence $\sigma$ and use $\sigma\leq\rho$ to indicate that
$\sigma$ is a prefix of the sequence $\rho$, i.e.\ $\exists \mu. \rho=\sigma\mu$.

\section{Processes of place/transition systems}\label{sec-semantics}

\noindent
We now define two notions of a process of a net, modelling a run of
the represented system on two levels of abstraction.

\subsection{GR-processes}\label{sec-GR}

A (GR-)process is essentially a conflict-free, acyclic net together
with a mapping function to the original net. It can be obtained by
unwinding the original net, choosing one of the alternatives in case
of conflict.
The acyclic nature of the process gives rise to a notion of causality
for transition firings in the original net via the mapping function.
A conflict present in the original net is represented by the existence of
multiple processes, each representing one possible way to decide the conflict.
\pagebreak

\begin{define}{}\label{df-process}
 \item[]
  A pair $\PP = (\NN, \pi)$ is a
  \defitem{(GR-)process} of a net $N = (S, T, F, M_0)$
  iff
  \begin{itemise}\itemsep 3pt
   \item $\NN = (\SS, \TT, \FF, \MM_0)$ is a net, satisfying
   \begin{itemisei}
    \item $\forall s \in \SS. |\precond{s}| \leq\! 1\! \geq |\postcond{s}|
    \wedge\, \MM_0(s) = \left\{\begin{array}{@{}l@{\quad}l@{}}1&\mbox{if $\precond{s}=\emptyset$}\\
                                   0&\mbox{otherwise,}\end{array}\right.$
    \item $\FF$ is acyclic, \ie
      $\forall x \inp \SS \cup \TT. (x, x) \mathbin{\not\in} \FF^+$,
      where $\FF^+$ is the transitive closure of $\{(x,y)\mid \FF(x,y)>0\}$,
    \item and $\{t \in \TT \mid (t,u)\in \FF^+\}$ is finite for all $u\in \TT$.
   \end{itemisei}
    \item $\pi:\SS \cup \TT \rightarrow S \cup T$ is a function with 
    $\pi(\SS) \subseteq S$ and $\pi(\TT) \subseteq T$, satisfying
   \begin{itemisei}
    \item $\pi(\MM_0) = M_0$, i.e.\ $M_0(s) = |\pi^{-1}(s) \cap \MM_0|$ for all $s\in S$, and
    \item $\forall t \in \TT, s \in S.
      F(s, \pi(t)) = |\pi^{-1}(s) \cap \precond{t}| \wedge
      F(\pi(t), s) = |\pi^{-1}(s) \cap \postcond{t}|$, i.e.\
      $\forall t \in \TT. \pi (^\bullet t)= {^\bullet \pi(t)} \wedge
      \pi (t^\bullet)= {\pi(t)^\bullet}$.
\pagebreak[3]
  \end{itemisei}
  \end{itemise}
  $P$ is called \defitem{finite} if $\TT$ is finite. The \defitem{end of} $P$
  is defined as $P^\circ = \{s \in \SS \mid \postcond{s} = \emptyset\}$.%
\end{define}

\noindent
For example \reffig{badswapping} gives a net and two of its GR-processes, in which
each place and transition $x$ is labelled $\pi(x)$.
Let $\GR$ (resp.~$\fGR$) denote the collection of (finite) GR-processes of~$N\!$.

A process is not required to represent a completed run of the original net.
It might just as well stop early. In those cases, some set of transitions can
be added to the process such that another (larger) process is obtained. This
corresponds to the system taking some more steps and gives rise to a natural
order between processes.

\begin{define}{
  Let $\PP = ((\SS, \TT, \FF, \MM_0), \pi)$ and $\PP' = ((\SS', \TT\,', \FF\,', \MM_0'), \pi')$ be
  two processes of the same net.
}\label{df-extension}
\item
  $\PP'$ is a \defitem{prefix} of $\PP$, notation $\PP'\leq \PP$, and 
  $\PP$ an \defitem{extension} of $\PP'$, iff 
    $\SS'\subseteq \SS$,
    $\TT\,'\subseteq \TT$,
    $\MM_0' = \MM_0$,
    $\FF\,'=\FF\restrictedto(\SS' \mathord\times \TT\,' \mathrel\cup \TT\,' \mathord\times \SS')$
    and $\pi'=\pi\restrictedto(\SS'\cup \TT\,')$.
\item
  A process of a net is said to be \defitem{maximal} if 
  it has no proper extension.
\end{define}

\noindent
The requirements above imply that if $\PP'\leq \PP$, $(x,y)\in\FF^+$
and $y\in \SS' \cup \TT\,'$ then $x\in \SS' \cup \TT\,'$.
Conversely, any subset $\TT\,'\subseteq \TT$ satisfying
$(t,u)\in \FF^+ \wedge u\in \TT\,' \Rightarrow t\in \TT\,'$ uniquely determines a
prefix of $\PP$.

In \cite{petri77nonsequential,genrich80dictionary,goltz83nonsequential} processes were
defined without requiring the third condition on $\NN$ from \refdf{process}.
Goltz and Reisig \cite{goltz83nonsequential} observed that certain
processes did not correspond with runs of systems, and proposed to
restrict the notion of a process to those that can be approximated by
finite processes \cite[end of Section~3]{goltz83nonsequential}.
This is the role of the third condition on $\NN$ in \refdf{process}; it
is equivalent to requiring that each transition occurs in a finite prefix.
In \cite{petri77nonsequential,genrich80dictionary,goltz83nonsequential}
only processes of finite nets were considered. For those processes,
the requirement of \emph{discreteness} proposed in \cite{goltz83nonsequential}
is equivalent with imposing the third condition on $\NN$ in \refdf{process}
\cite[Theorem 2.14]{goltz83nonsequential}.

Two processes $\PP \mathbin= (\NN, \pi)$ and $\PP' \mathbin= (\NN\,', \pi')$
are \defitem{isomorphic}, notation $\PP \cong \PP'$, iff there exists
an isomorphism $\phi$ from $\NN$ to $\NN\,'$ which respects the
process mapping, i.e.\ $\pi = \pi' \circ \phi$.
Here an \emph{isomorphism} $\phi$ between two nets $\NN=(\SS, \TT, \FF,
\MM_0)$ and $\NN\,'=(\SS', \TT\,', \FF\,', \MM'_0)$ is a
bijection between their places and transitions such that
$\MM'_0(\phi(s))=\MM_0(s)$ for all $s\in\SS$ and
$\FF\,'(\phi(x),\phi(y))=\FF(x,y)$ for all $x,y\in \SS\cup\TT$.

\subsection{BD-processes}

Next we formally introduce the swapping transformation and the resulting
equivalence notion on GR-processes from \cite{best87both}.

\begin{define}{
  Let $\PP = ((\SS, \TT, \FF, \MM_0), \pi)$ be a process and
  let $p, q \in \SS$ with $(p,q) \notin \FF^+\cup (\FF^+)^{-1}$ and
  $\pi(p) = \pi(q)$.
  }
\label{df-swap}
\item[]
  Then $\swap(\PP, p, q)$ is defined as $((\SS, \TT, \FF\,', \MM_0), \pi)$ with
  \begin{equation*}
    \FF\,'(x, y) = \begin{cases}
      \FF(q, y) & \text{ iff } x = p,\, y \in \TT\\
      \FF(p, y) & \text{ iff } x = q,\, y \in \TT\\
      \FF(x, y) & \text{ otherwise. }
    \end{cases}
  \end{equation*}
\end{define}

\noindent
We refer to \cite{best87both,GGS19} for an explanation of this definition and further examples.
Here we only give the processes of \reffig{badswapping} as being connected via $\swap$.

\begin{define}{}\label{df-swapeq}
\item
  Two processes $\PP$ and $\QQ$ of the same net are
  \defitem{one step swapping equivalent} ($\PP \swapeq \QQ$) iff
  $\swap(\PP, p, q)$ is isomorphic to $\QQ$ for some places $p$ and $q$.
\item
We write $\swapeq^*$ for the reflexive and transitive closure of $\swapeq$.
\end{define}
In \cite[Definition 7.8]{best87both} swapping equivalence---denoted $\equiv_1^\infty$---is defined
in terms of \emph{reachable B-cuts}. In \cite{GGS19} this definition was reformulated as
follows, also introducing the associated preorder $\sqsubseteq_1^\infty$.

\begin{define}{Let $N$ be a net, and $P,Q\in \GR$.}\label{df-BD-swapping-alt}
\item[]
Then $\PP \sqsubseteq_1^\infty \QQ$ iff\vspace{-1ex}
\begin{equation*}
\forall P''\inp\fGR, P'' \leq P. \exists P',Q'\in\fGR.
  P'' \leq P' \swapeqi^* Q' \leq Q.
\end{equation*}
Moreover, $\PP \equiv_1^\infty \QQ$ iff $\PP \sqsubseteq_1^\infty \QQ \wedge \QQ \sqsubseteq_1^\infty \PP$.
\end{define}
Thus, $\PP \sqsubseteq_1^\infty \QQ$ holds if and only if each finite prefix of $\PP$ 
can be extended into a finite process that is $\swapeqi^*$-equivalent to a prefix of $\QQ$.

In \cite{GGS19} it is shown that $\sqsubseteq_1^\infty$ is a preorder, and thus $\equiv_1^\infty$ an
equivalence relation on GR-processes. Trivially, $\swapeq^*$ is included in $\swapeq^\infty$.

\begin{define}{}
\item[]
We call a $\swapeq^\infty$-equivalence class of GR-processes
a \defitem{BD-process}.
\end{define}

\section{Conflicts in place/transition systems}\label{sec-conflict}

We recall the canonical notion of conflict introduced in
\cite{goltz86howmany}.

\begin{define}{
  Let $N \mathbin= (S, T, F, M_0)$ be a net and $M \in
  \bbbn^S\!$.
}\label{df-semanticconflict}
\item
  A finite, non-empty multiset $G \in \bbbn^T$ is in
  \defitem{(semantic) conflict} in $M$ iff\vspace{2pt}\\
  $\neg M\goesto[G] ~~\wedge~~ \forall t \in G. M\goesto[G \restrictedto \{t\}]$.
\item
  $N$ is \defitem{(semantic) conflict-free} iff
  no finite, non-empty multiset $G \in \bbbn^T$ is in semantic conflict in any
  $M$ with $M_0 \goesto[] M$.
\item
  $N$ is \defitem{binary-conflict-\!-free} iff
  no multiset $G \in \bbbn^T$ with $|G| = 2$ is in semantic conflict in any
  $M$ with $M_0 \goesto[] M$.
\end{define}
Thus, $N$ is binary-conflict-\!-free iff whenever two different
transitions $t$ and $u$ are enabled at a reachable marking $M$, then
also the step $\{t,u\}$ is enabled at $M$.
The above concept of (semantic) conflict-freeness formalises the intuitive notion that
there are no choices to resolve.
In \cite{GGS19} the above definition is compared with other notions of conflict and
conflict-freeness that occur in the literature.

A finite multiset $G$ of transitions has a \emph{structural conflict}
iff it contains two different transitions that share a preplace.
We proposed in \cite{glabbeek11ipl} a class of \mbox{P\hspace{-1pt}/T} systems
where this structural definition of conflict matches the semantic definition of conflict as
given above. We called this class of nets \defitem{structural conflict
nets}\footnote{This class pertains only to the context of this work and reappears in the conclusion.}.
For a net to be a structural conflict net, we require that two
transitions sharing a preplace will never occur both in one step.

\begin{define}{
  Let $N \mathbin= (S, T, F, M_0)$ be a net.
}\label{df-structuralconflict}
\item[]
  $N$ is a \defitem{structural conflict net} iff
  $\forall t, u.
    (M_0 \goesto[]\;\goesto[\{t, u\}]) \implies
    \precond{t} \cap \precond{u} = \emptyset$.
\end{define}
Note that this excludes self-concurrency from the possible behaviours in a
structural conflict net: as in our setting every transition has at least one
preplace, $t = u$ implies $\precond{t} \cap \precond{u} \ne \emptyset$.
Also note that in a structural conflict net a non-empty, finite multiset $G$ is in
conflict in a reachable marking $M$ iff $G$ is a set, each transition
from $G$ is  enabled at $M$, and and two distinct transitions in $G$
are in conflict in $M$. Hence a structural conflict net is conflict-free if
and only if it is binary-conflict-\!-free.  Moreover, two transitions enabled
in $M$ are in (semantic) conflict iff they share a preplace.

\vspace{-2ex}
\section{Characterising BD-processes by firing sequences}
\label{sec-firing sequences}
\vspace{-0.5ex}

In Section~\ref{sec-semantics} a BD-process was defined as a
$\swapeq^\infty$-equivalence class of GR-processes; moreover the
preorder $\sqsubseteq_1^\infty$ on GR-processes induces a partial
order on BD-processes, and hence a concept of a largest BD-process.

Best and Devillers \cite{best87both} introduced an equivalence relation
$\equiv_0^\infty$ on the firing sequences of a countable net, such that the
BD-processes are in a bijective correspondence with the 
$\equiv_0^\infty$-equivalence classes of firing sequences, called
\emph{FS-processes} in \cite{glabbeek11ipl}.
In this section we define a preorder $\sqsubseteq_0^\infty$ on the
firing sequences of a net, with kernel $\equiv_0^\infty$, that thus induces
a partial order on FS-processes, and hence a concept of a largest FS-process.\pagebreak
We show that the bijection between BD-processes and FS-processes respects these orders, and
therefore also the associated notion of a largest process.
Thus a countable net has a largest BD-process iff it has a largest FS-process.

Our main result, that a countable P/T system without binary-conflict
has a largest BD-process, can therefore be established in terms of FS-processes.

\subsection{FS-processes}

The behaviour of a net can be described not only by its processes, but also by
its firing sequences. The imposed total order on transition firings abstracts
from information on causal dependence, or concurrency, between transition
firings.  To retrieve this information we introduce an \defitem{adjacency}
relation on firing sequences, recording which interchanges of transition
occurrences are due to semantic independence of transitions. Hence adjacent
firing sequences represent the same run of the net. We then define
\defitem{FS-processes} in terms of the resulting equivalence classes of firing
sequences. Adjacency is similar to the idea of Mazurkiewicz traces
\cite{mazurkiewicz95tracetheory}, allowing to exchange concurrent transitions.
However, it is based on the semantic notion of concurrency instead of
the global syntactic independence relation of trace theory, similar as
in the approach of generalising trace theory in \cite{HKT95}.
Further discussion on adjacency can be found in
\cite{carstensen91,vogler90swapping}. Carstensen \cite{carstensen91} studies the
complexity of the relation $\connectedto$ defined below;
Vogler \cite{vogler90swapping} finds canonical representatives of
$\connectedto$-equivalence classes for a restricted class of nets.

\begin{define}{
  Let $N = (S, T, F, M_0)$ be a net, and $\sigma, \rho \in \FSinf(N)$.
}\label{df-connectedto}
\item
  $\sigma$ and $\rho$ are \defitem{adjacent}, $\sigma \adjacent \rho$,
  iff $\sigma = \sigma_1 t u \sigma_2$, $\rho = \sigma_1 u t \sigma_2$ and
  $M_0 \goesto[\sigma_1]\goesto[\{t, u\}]$.
\item
  We write $\connectedto$ for the reflexive and transitive closure of $\adjacent$.
\end{define}

\noindent
Note that $\connectedto$-related firing sequences contain the same
multiset of transition occurrences.
When writing $\sigma \connectedto \rho$ we implicitly claim that
$\sigma, \rho \in \FSinf(N)$.
Furthermore $\sigma \connectedto \rho \wedge \sigma\mu \in \FSinf(N)$
implies $\sigma \mu \connectedto \rho \mu$ for all $\mu \in T^* \cup T^\omega$.

\begin{lemma}\rm\label{lem-swapprefix-fs1}
Let $N = (S, T, F, M_0)$ be a net, let $\sigma_1 \adjacent \sigma_2 \leq \sigma_3$ for some
$\sigma_1,\sigma_2 \inp \FS(N)$ and $\sigma_3 \inp \FSinf(N)$.
Then there is a
$\sigma'\inp \FSinf(N)$ with $\sigma_1 \leq \sigma' \adjacent \sigma_3$.
Moreover, if $\sigma_3 \in \FS(N)$ then $\sigma' \in \FS(N)$.
\end{lemma}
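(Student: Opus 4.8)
The statement says: if $\sigma_1 \adjacent \sigma_2$ (a single swap of two adjacent, concurrently-enabled transition occurrences) and $\sigma_2 \leq \sigma_3$ (prefix), then the same swap can be ``pushed forward'' to relate $\sigma_1$ to a longer firing sequence $\sigma'$ that has $\sigma_1$ as a prefix and is adjacent to $\sigma_3$. The plan is to simply perform the same swap at the same position in $\sigma_3$ as was done to get from $\sigma_2$ to $\sigma_1$, and check that everything remains a valid firing sequence.

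First I would write $\sigma_1 = \sigma_a\, t\, u\, \sigma_b$ and $\sigma_2 = \sigma_a\, u\, t\, \sigma_b$ with $M_0 \goesto[\sigma_a]\goesto[\{t,u\}]$, as in \refdf{connectedto}. Since $\sigma_2 \leq \sigma_3$, write $\sigma_3 = \sigma_a\, u\, t\, \sigma_b\, \mu$ for some suffix $\mu \in T^* \cup T^\omega$ (here $\sigma_b\mu$ is the part of $\sigma_3$ beyond the prefix $\sigma_a u t$; note $\sigma_2 = \sigma_a u t \sigma_b$ so $\sigma_3 = \sigma_2\mu$). Set $\sigma' := \sigma_a\, t\, u\, \sigma_b\, \mu = \sigma_1\mu$. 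Then by construction $\sigma_1 \leq \sigma'$, and $\sigma' \adjacent \sigma_3$ provided $\sigma'$ and $\sigma_3$ are both genuine elements of $\FSinf(N)$: they already have the form $\sigma_a t u (\sigma_b\mu)$ and $\sigma_a u t (\sigma_b\mu)$, and the enabling condition $M_0\goesto[\sigma_a]\goesto[\{t,u\}]$ is inherited verbatim from $\sigma_1 \adjacent \sigma_2$.

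The only real content is therefore showing $\sigma' \in \FSinf(N)$, i.e.\ that after firing $\sigma_a\, t\, u$ we reach the same marking as after $\sigma_a\, u\, t$, so that the common continuation $\sigma_b\mu$ (which is known to be fireable from the latter marking because $\sigma_3 = \sigma_a u t \sigma_b\mu$ is a firing sequence) is also fireable from the former. This is the standard ``diamond'' fact: if $M\goesto[\{t,u\}]$ then $M\goesto[tu]M''$ and $M\goesto[ut]M''$ for the \emph{same} $M''$ — indeed the excerpt already notes that $M\goesto[\{t,u\}]$ implies both $M\goesto[tu]$ and $M\goesto[ut]$, and from \refdf{firing} the resulting marking in each case is $(M - \precond{\{t,u\}}) + \{t,u\}^\bullet$, which does not depend on the order. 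Applying this at the marking $M$ with $M_0\goesto[\sigma_a]M$ gives that $\sigma_a t u$ and $\sigma_a u t$ lead to the same marking, hence $\sigma' = \sigma_a t u \sigma_b\mu$ is a firing sequence.

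Finally, the ``moreover'' clause: if $\sigma_3 \in \FS(N)$ then $\mu$ is finite, hence $\sigma' = \sigma_1\mu$ is finite, i.e.\ $\sigma' \in \FS(N)$. I do not expect any genuine obstacle here; the only point requiring a little care is bookkeeping the decomposition of $\sigma_3$ correctly (the piece after the swapped pair in $\sigma_2$ is $\sigma_b$, and $\sigma_3$ extends $\sigma_2$ by $\mu$, so the common suffix past position $|\sigma_a|+2$ is $\sigma_b\mu$), and invoking the order-independence of the marking reached by a step rather than re-deriving it.
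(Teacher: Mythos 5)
Your proposal is correct and follows essentially the same route as the paper's proof: decompose $\sigma_1=\alpha tu\beta$, $\sigma_2=\alpha ut\beta$, $\sigma_3=\sigma_2\gamma$, set $\sigma':=\alpha tu\beta\gamma$, and use the diamond fact that $M_0\goesto[\alpha tu]$ and $M_0\goesto[\alpha ut]$ reach the same marking (the one reached by the step $\{t,u\}$) so that the common continuation $\beta\gamma$ remains fireable. The handling of the finiteness clause is likewise the same.
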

\begin{proof}
  We have that $\sigma_1 = \alpha tu\beta$,
  $\sigma_2 = \alpha ut \beta$,
  $M_0 \goesto[\alpha]\goesto[\{t, u\}] M_1$ and
  $\sigma_3 = \sigma_2 \gamma$ for some
  $\alpha, \beta \in T^*$, $\gamma \in T^* \cup T^\omega$ and $M_1 \in \bbbn^S$.
  Naturally then, we take $\sigma' = \alpha tu \beta\gamma$.
  From $M_0 \goesto[\alpha] \goesto[\{t, u\}] M_1$ follows
  $M_0 \goesto[\alpha tu] M_1$
  and $M_0 \goesto[\alpha ut] M_1$.
  From $\sigma_3 \in \FSinf(N)$ follows additionally
  $M_1 \goesto[\beta\gamma]$.
  Hence $M_0 \goesto[\alpha tu] M_1 \goesto[\beta\gamma]$ and
  $\sigma' \in \FSinf(N)$. The case $\sigma_3 \inp \FS(N)$ follows
  similarly.
  That  $\sigma_1 \leq \sigma' \adjacent \sigma_3$ holds trivially.
  \qed
\end{proof}

\begin{corollary}\rm\label{cor-swapprefix-fs}
Let $\sigma_1 \connectedto \sigma_2 \leq \sigma_3$ for some
$\sigma_1,\sigma_2 \inp \FS(N)$ and $\sigma_3 \inp \FSinf(N)$.
Then there is a
$\sigma'\inp \FSinf(N)$ with $\sigma_1 \leq \sigma' \connectedto \sigma_3$.
Moreover, if $\sigma_3 \in \FS(N)$ then $\sigma' \in \FS(N)$.
\qed
\end{corollary}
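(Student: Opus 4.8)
The plan is to reduce the statement to \reflem{swapprefix-fs1} by induction on the length $n$ of an $\adjacent$-chain $\sigma_1 = \rho_0 \adjacent \rho_1 \adjacent \cdots \adjacent \rho_n = \sigma_2$ witnessing $\sigma_1 \connectedto \sigma_2$; such a chain exists because $\connectedto$ is by definition the reflexive and transitive closure of $\adjacent$. Before starting the induction I would record the auxiliary fact that $\adjacent$ preserves membership in $\FS(N)$: if $\tau \adjacent \tau'$ and $\tau$ is finite, then $\tau'$ merely interchanges two adjacent occurrences of $\tau$ and hence has the same length, so $\tau' \in \FS(N)$ as well. In particular all the intermediate sequences $\rho_i$ lie in $\FS(N)$.

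In the base case $n = 0$ we have $\sigma_1 = \sigma_2 \leq \sigma_3$ and simply take $\sigma' = \sigma_3$. In the inductive step, with $n \geq 1$, I would apply the induction hypothesis to the sub-chain $\rho_1 \adjacent \cdots \adjacent \rho_n = \sigma_2$ of length $n-1$, using $\rho_1 \connectedto \sigma_2 \leq \sigma_3$ with $\rho_1, \sigma_2 \in \FS(N)$ and $\sigma_3 \in \FSinf(N)$; this produces some $\sigma'' \in \FSinf(N)$ with $\rho_1 \leq \sigma'' \connectedto \sigma_3$, and with $\sigma'' \in \FS(N)$ whenever $\sigma_3 \in \FS(N)$. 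Since $\sigma_1 = \rho_0 \adjacent \rho_1 \leq \sigma''$, \reflem{swapprefix-fs1} then yields $\sigma' \in \FSinf(N)$ with $\sigma_1 \leq \sigma' \adjacent \sigma''$, again finite whenever $\sigma''$ is. Concatenating the relations gives $\sigma_1 \leq \sigma' \adjacent \sigma'' \connectedto \sigma_3$, hence $\sigma_1 \leq \sigma' \connectedto \sigma_3$, and the finiteness clause propagates: $\sigma_3 \in \FS(N)$ forces $\sigma'' \in \FS(N)$ and then $\sigma' \in \FS(N)$.

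I expect no genuine obstacle, since the argument is a direct iteration of the lemma. The only points needing care are the bookkeeping of finiteness — handled by the preliminary observation together with the ``moreover'' clauses threaded through both the lemma and the induction hypothesis — and the choice to strip the \emph{first} link $\rho_0 \adjacent \rho_1$ off the chain, so that the remaining chain genuinely witnesses $\rho_1 \connectedto \sigma_2$ and the induction hypothesis applies, rather than peeling off the last link.
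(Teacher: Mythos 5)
Your induction on the length of the $\adjacent$-chain, peeling off one link at a time and invoking \reflem{swapprefix-fs1} while threading the finiteness clause through, is exactly the routine iteration the paper has in mind when it states the corollary without proof. The argument is correct, and the care you take that all intermediate $\rho_i$ remain finite firing sequences (so the lemma's hypotheses stay satisfied) is the only point of substance.
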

\begin{lemma}\rm\label{lem-finite-swap}
  Let $\sigma''\mathbin\in\FS(N)$, $\rho\mathbin\in\FSinf(N)$. Then
  $\exists \rho^\dagger\in\FSinf(N). \sigma''\leq \rho^\dagger\connectedto \rho$ iff
$\exists \sigma',\rho'\in\FS(N).   \sigma'' \leq \sigma' \equiv_0^* \rho' \leq \rho$.
\end{lemma}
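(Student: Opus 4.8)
The plan is to prove the two directions of the biconditional separately, using the fact that $\equiv_0^*$ is the reflexive-transitive closure of $\adjacent$ restricted to \emph{finite} firing sequences, whereas $\connectedto$ is the reflexive-transitive closure of $\adjacent$ on \emph{all} (finite or infinite) firing sequences. The ``$\Leftarrow$'' direction is the easy one: given $\sigma',\rho'\in\FS(N)$ with $\sigma''\leq\sigma'\equiv_0^*\rho'\leq\rho$, write $\rho=\rho'\mu$ for some suffix $\mu\in T^*\cup T^\omega$. Since $\equiv_0^*$ is obtained from $\adjacent$ by finitely many steps and each such step $\sigma_1\adjacent\sigma_2$ extends to $\sigma_1\nu\adjacent\sigma_2\nu$ (the remark after \refdf{connectedto}), we get $\sigma'\mu\connectedto\rho'\mu=\rho$. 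Taking $\rho^\dagger:=\sigma'\mu$ we have $\sigma''\leq\sigma'\leq\sigma'\mu=\rho^\dagger$ and $\rho^\dagger\connectedto\rho$, and $\rho^\dagger\in\FSinf(N)$ because $\rho\in\FSinf(N)$ and $\connectedto$ preserves membership in $\FSinf(N)$; so $\rho^\dagger$ witnesses the left-hand side.

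For the ``$\Rightarrow$'' direction, suppose $\rho^\dagger\in\FSinf(N)$ with $\sigma''\leq\rho^\dagger\connectedto\rho$. The aim is to replace the possibly-infinite chain of adjacencies connecting $\rho^\dagger$ to $\rho$ by a \emph{finite} chain that only touches a finite prefix. Write the witnessing chain as $\rho^\dagger=\tau_0\adjacent\tau_1\adjacent\cdots\adjacent\tau_n=\rho$, with each $\tau_i\in\FSinf(N)$. The key observation is that $\sigma''$ is a finite prefix of $\tau_0$, so $|\sigma''|=:k$ is finite. Each adjacency step swaps two \emph{consecutive} transition occurrences at some position; only finitely many of the $n$ swaps in the chain can possibly affect positions within the first $k$ occurrences, but more care is needed because a swap at a position $>k$ in $\tau_i$ can move material that a later swap then brings into the first $k$ positions. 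The clean way is: let $\ell$ be the largest position index touched by any of the $n$ swaps; this is a finite number since there are finitely many swaps. Truncate every $\tau_i$ to its prefix $\hat\tau_i$ of length $\ell$ (so $\hat\tau_i$ consists of the first $\ell$ transition occurrences of $\tau_i$). Each $\hat\tau_i$ is a finite firing sequence of $N$ (a prefix of a firing sequence), $\sigma''\leq\hat\tau_0$ since $k\le\ell$, and since each swap step lies within the first $\ell$ positions and the $\adjacent$-condition $M_0\goesto[\sigma_1]\goesto[\{t,u\}]$ only concerns the marking reached before the swapped pair, we get $\hat\tau_i\adjacent\hat\tau_{i+1}$ or $\hat\tau_i=\hat\tau_{i+1}$ for each $i$; hence $\hat\tau_0\equiv_0^*\hat\tau_n$. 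Finally $\hat\tau_n$ is a prefix of $\rho$, so with $\sigma':=\hat\tau_0$ and $\rho':=\hat\tau_n$ we obtain $\sigma''\leq\sigma'\equiv_0^*\rho'\leq\rho$, as required.

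The main obstacle I anticipate is making the truncation argument airtight: one must be sure that ``the largest position touched by a swap'' is well-defined and finite (it is, since the chain $\tau_0\adjacent\cdots\adjacent\tau_n$ has finitely many links, each a single adjacency involving a single position), and that truncating to a \emph{common} length $\ell$ simultaneously turns every $\tau_i\adjacent\tau_{i+1}$ into $\hat\tau_i\adjacent\hat\tau_{i+1}$ (or an equality). The latter needs the remark that if $\sigma_1 tu\sigma_2\adjacent\sigma_1 ut\sigma_2$ via $M_0\goesto[\sigma_1]\goesto[\{t,u\}]$, then for any prefix containing the whole $\sigma_1 tu$ (equivalently $\sigma_1 ut$), the truncation is again an adjacency, while a prefix ending at or before position $|\sigma_1|$ is unaffected; since $\ell$ is chosen to dominate the position of every swapped pair, the first case always applies. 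A secondary point is to confirm that prefixes of (infinite) firing sequences are themselves firing sequences, which is immediate from \refdf{firing}. I would also note that \reflem{swapprefix-fs1} and \refcor{swapprefix-fs} are not strictly needed for this particular lemma, although they reflect the same ``adjacency can be pushed onto a prefix'' phenomenon; I would cross-reference them only if a reviewer wants the truncation spelled out via a single-step induction instead of the bulk argument above.
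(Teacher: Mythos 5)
Your proof is correct and takes essentially the same route as the paper: your ``if'' direction is exactly the content of Corollary~\ref{cor-swapprefix-fs} (re-derived inline via the suffix-appending remark after Definition~\ref{df-connectedto}), and your ``only if'' direction is the paper's one-line argument---truncate to a finite prefix containing all interchanged positions---spelled out in full. The only nit is that $\ell$ should be taken as $\max(|\sigma''|,\text{largest swapped position})$ (with the convention $\ell=|\sigma''|$ when the chain has no swaps), since as written $k\leq\ell$ is asserted but not guaranteed; with that trivial adjustment the argument is airtight.
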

\begin{proof}
``If'' follows by \refcor{swapprefix-fs}.
For ``only if'' take $\rho'$ to be the smallest prefix of $\rho$ that contains all transitions
interchanged between $\rho$ and $\rho^\dagger$.
\qed
\end{proof}

\noindent
For firing sequences $\sigma,\rho\in\FSinf(N)$,
$\sigma \connectedto \rho$ means that $\sigma$ can be transformed into $\rho$ by repeated
exchange of two successive transitions that can fire concurrently. 
However, $\connectedto$ allows for only finitely many permutations.
In \cite{best87both} a relation $\equiv^\infty_0$ on $\FSinf(N)$ is defined that in some
sense allows infinitely many permutations:

\begin{defineCite}{\cite{best87both}}{Let $N$ be a net, and $\sigma,\rho\inp \FSinf(N)$.}
\item
Write $\sigma \n \rho$ when $\sigma$ and $\rho$ are equal or
both have the same length $\geq n$ (possibly infinite) and agree on the prefix of length $n$.
\item
Then $\sigma \equiv^\infty_0 \rho$ iff
$\forall n\in \bbbn. \exists \sigma',\rho'\in \FSinf(N).
(\sigma\connectedto\sigma' \n \rho \wedge
 \sigma \n \rho' \connectedto \rho)$.
\end{defineCite}
\begin{observation}\rm
$\sigma \equiv^\infty_0 \rho$ iff
$\forall \rho''\inp\FS(N), \rho'' \leq \rho. \exists \sigma'\in\FSinf(N). \rho''\leq \sigma'\connectedto \sigma$
and
$\forall \sigma''\inp\FS(N), \sigma'' \leq \sigma. \exists \rho'\in\FSinf(N). \sigma''\leq \rho'\connectedto \rho$.
\end{observation}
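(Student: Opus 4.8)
\medskip
\noindent\textbf{Proof plan.}
The plan is to unfold the definition of $\equiv_0^\infty$ and reconcile the relation $\n$ --- which bundles ``equal length'' together with ``agreement on the initial segment of length $n$'' --- with the purely prefix-based phrasing on the right-hand side. Abbreviate the two defining clauses of $\sigma\equiv_0^\infty\rho$ as $(\star)$: for every $n$ there is $\sigma'\in\FSinf(N)$ with $\sigma\connectedto\sigma'\n\rho$, and $(\star')$: for every $n$ there is $\rho'\in\FSinf(N)$ with $\sigma\n\rho'\connectedto\rho$; and call the two conjuncts on the right (A) and (B). Since $\connectedto$ and $\n$ are symmetric, $(\star')$ and (B) are obtained from $(\star)$ and (A) by interchanging $\sigma$ and $\rho$, so it suffices to understand the relation between $(\star)$ and (A).

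First I would settle the lengths, using only that a $\connectedto$-step preserves the multiset of transition occurrences and hence the length: (i) $(\star)$ forces $|\sigma|=|\rho|$ in $\bbbn\cup\{\infty\}$ --- instantiate at $n=0$, where $\sigma'\n\rho$ just says $\sigma'=\rho$ or $|\sigma'|=|\rho|$, and either way $|\sigma|=|\sigma'|$; and (ii) (A) and (B) together force $|\sigma|=|\rho|$ --- applying (A) to every finite prefix of $\rho$ gives $|\sigma|\geq|\rho|$, and (B) symmetrically gives $|\rho|\geq|\sigma|$.

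The heart of the matter is then the equivalence: $(\star)$ holds iff (A) holds and $|\sigma|=|\rho|$. For the forward direction, given a finite prefix $\rho''\leq\rho$ of length $m$, instantiate $(\star)$ at $n=m$: the witness $\sigma'\connectedto\sigma$ with $\sigma'\n\rho$ is either $\rho$ itself, whence $\rho''\leq\rho=\sigma'$, or it shares its first $m$ letters with $\rho$, whence $\rho''$ --- being exactly that segment --- is a prefix of $\sigma'$; this yields (A), and (i) supplies the length equality. For the converse, fix $n$. If $|\rho|\geq n$, feed the length-$n$ prefix of $\rho$ into (A) to obtain $\sigma'\connectedto\sigma$ extending it; then $|\sigma'|=|\sigma|=|\rho|\geq n$ and $\sigma'$ agrees with $\rho$ on the first $n$ letters, so $\sigma'\n\rho$. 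If $|\rho|<n$, then $\rho$ is finite and, by the assumed length equality, $|\sigma|=|\rho|$, so feeding $\rho$ itself into (A) yields $\sigma'\connectedto\sigma$ with $\rho\leq\sigma'$ and $|\sigma'|=|\rho|$, which forces $\sigma'=\rho$; hence $\rho\connectedto\sigma$, and $\sigma'=\rho$ trivially satisfies $\sigma'\n\rho$.

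Finally I would assemble the pieces: by the symmetry noted above, the heart of the matter also gives ``$(\star')$ iff (B) and $|\sigma|=|\rho|$'', so $\sigma\equiv_0^\infty\rho$ is equivalent to (A) $\wedge$ (B) $\wedge$ $|\sigma|=|\rho|$, and fact (ii) lets the last conjunct be dropped. I expect the only genuinely delicate point --- the main obstacle, such as it is --- to be keeping the length bookkeeping honest: $\n$ really does insist on equal length, which is exactly what makes the sub-case $n>|\rho|$ (with $\rho$ finite) behave differently from the generic one and forces facts (i) and (ii) to be established separately rather than absorbed into the prefix manipulation. No appeal to the auxiliary Lemma~\ref{lem-swapprefix-fs1} or Corollary~\ref{cor-swapprefix-fs} appears to be needed; the statement is in essence a repackaging of the definition.
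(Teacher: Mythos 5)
Your proposal is correct, and it matches the paper's treatment: the paper states this Observation without proof, regarding it as an immediate repackaging of the definition of $\equiv_0^\infty$, which is exactly what your argument carries out, with the only non-trivial point (the length bookkeeping hidden in $\n$, handled via your facts (i) and (ii)) made explicit and handled correctly.
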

In words, $\sigma \equiv^\infty_0 \rho$ holds iff each finite prefix $\sigma''$ of $\sigma$ is also a prefix of some firing sequence $\rho'$ that is $\connectedto$-equivalent to $\rho$, and vice versa.

Analogously, $\sqsubseteq^\infty_0$ should be the binary relation on
$\FSinf(N)$ given by $\sigma \sqsubseteq^\infty_0 \rho$ iff
$\forall \sigma''\inp\FS(N), \sigma'' \leq \sigma. \exists \rho'\in\FSinf(N). \sigma''\leq \rho'\connectedto \rho$.
By \reflem{finite-swap} the  $\connectedto$-conversion of $\rho$ into $\rho'$ can be
  done in a finite prefix of $\rho$.
  This allows us to state the formal definition of $\sqsubseteq_0^\infty$ as follows, which will be advantageous
later on:
\begin{define}{Let $N$ be a net, and $\sigma,\rho\in \wFS$.}\label{df-BD-swapping-fs-alt}
\item[]
Then $\sigma \sqsubseteq_0^\infty \rho$ iff\vspace{-1ex}
$$\forall \sigma''\inp\fFS, \sigma'' \leq \sigma. \exists \sigma',\rho'\in\fFS.
  \sigma'' \leq \sigma' \equiv_0^* \rho' \leq \rho.$$
\end{define}
\begin{observation}\rm
$\sigma \equiv_0^\infty \rho$ iff 
$\sigma \sqsubseteq_0^\infty \rho$ and $\rho \sqsubseteq_0^\infty \sigma$.
\end{observation}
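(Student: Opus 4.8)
The plan is to obtain this Observation directly from the earlier Observation characterising $\equiv_0^\infty$ in terms of prefixes, by identifying each of its two conjuncts with one instance of the preorder $\sqsubseteq_0^\infty$. Recall that that earlier Observation says $\sigma \equiv_0^\infty \rho$ iff both ``every finite prefix $\rho''$ of $\rho$ satisfies $\exists\sigma'\inp\wFS.\ \rho''\leq\sigma'\connectedto\sigma$'' and ``every finite prefix $\sigma''$ of $\sigma$ satisfies $\exists\rho'\inp\wFS.\ \sigma''\leq\rho'\connectedto\rho$''. So it suffices to show that the second of these clauses is exactly $\sigma\sqsubseteq_0^\infty\rho$ and the first is exactly $\rho\sqsubseteq_0^\infty\sigma$.

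First I would note that $\equiv_0$ is the adjacency relation $\adjacent$, so $\equiv_0^*$ coincides with $\connectedto$. Then, by \reflem{finite-swap} (applied with $\rho\inp\wFS$ and with $\sigma''$ any finite prefix of $\sigma\inp\wFS$), the matrix
$$\exists \sigma',\rho'\inp\fFS.\ \sigma'' \leq \sigma' \equiv_0^* \rho' \leq \rho$$
of \refdf{BD-swapping-fs-alt} is equivalent to $\exists\rho^\dagger\inp\wFS.\ \sigma''\leq\rho^\dagger\connectedto\rho$. Hence
$$\sigma \sqsubseteq_0^\infty \rho \iff \forall \sigma''\inp\fFS,\ \sigma'' \leq \sigma.\ \exists \rho'\inp\wFS.\ \sigma'' \leq \rho' \connectedto \rho,$$
and the right-hand side is verbatim the second clause in the earlier Observation's characterisation of $\equiv_0^\infty$. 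Interchanging the roles of $\sigma$ and $\rho$ shows likewise that $\rho \sqsubseteq_0^\infty \sigma$ is equivalent to the first clause. Combining these two equivalences with the earlier Observation then immediately yields $\sigma \equiv_0^\infty \rho \iff (\sigma \sqsubseteq_0^\infty \rho) \wedge (\rho \sqsubseteq_0^\infty \sigma)$, which is the claim.

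I do not expect a real obstacle; the statement is in essence a bookkeeping corollary recording that $\equiv_0^\infty$ is the kernel of the preorder $\sqsubseteq_0^\infty$. The only point deserving attention is the invocation of \reflem{finite-swap}: one has to check that its hypotheses hold (the quantified $\sigma''$ of \refdf{BD-swapping-fs-alt} lies in $\fFS$ and $\rho$ in $\wFS$) and that it is used as a genuine two-way equivalence, bridging the ``$\equiv_0^*$ between two finite prefixes'' formulation of \refdf{BD-swapping-fs-alt} and the ``$\connectedto$ into a single (possibly infinite) firing sequence'' formulation appearing in the earlier Observation.
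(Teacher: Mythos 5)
Your proposal is correct and matches the paper's intended justification exactly: the paper gives no explicit proof, but the surrounding text (the first Observation plus the remark that \reflem{finite-swap} lets the $\connectedto$-conversion of $\rho$ be confined to a finite prefix) sets up precisely the chain you make explicit, namely that \reflem{finite-swap} identifies the matrix of \refdf{BD-swapping-fs-alt} with the corresponding clause of the first Observation, and symmetrically for $\rho \sqsubseteq_0^\infty \sigma$. Your attention to the hypotheses of \reflem{finite-swap} and to its use as a two-way equivalence is exactly the right bookkeeping.
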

\begin{proposition}\rm\label{pr-swapprefix-fs-preorder}
$\sqsubseteq_0^\infty$ is a preorder on $\FSinf(N)$. So $\equiv_0^\infty$
  is an equivalence relation.
\end{proposition}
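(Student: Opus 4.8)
The plan is to check reflexivity and transitivity of $\sqsubseteq_0^\infty$ directly from \refdf{BD-swapping-fs-alt}; the statement about $\equiv_0^\infty$ then comes for free, since by the preceding observation $\equiv_0^\infty$ is precisely the symmetric kernel of $\sqsubseteq_0^\infty$, and the symmetric kernel of any preorder is an equivalence relation (reflexive and transitive because $\sqsubseteq_0^\infty$ is, symmetric by construction). Reflexivity of $\sqsubseteq_0^\infty$ is immediate: for $\sigma\in\FSinf(N)$ and an arbitrary finite prefix $\sigma''\in\fFS$ with $\sigma''\leq\sigma$, one takes $\sigma'=\rho'=\sigma''$, so that $\sigma''\leq\sigma''\equiv_0^*\sigma''\leq\sigma$, using that $\equiv_0^*$ (the relation $\connectedto$) is reflexive.

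For transitivity, assume $\sigma\sqsubseteq_0^\infty\rho$ and $\rho\sqsubseteq_0^\infty\tau$, and fix a finite prefix $\sigma''\in\fFS$ with $\sigma''\leq\sigma$; I would manufacture the required witnesses in three steps. First, $\sigma\sqsubseteq_0^\infty\rho$ applied to $\sigma''$ yields $\sigma',\rho'\in\fFS$ with $\sigma''\leq\sigma'\equiv_0^*\rho'\leq\rho$. Second, $\rho'$ is itself a finite prefix of $\rho$, so $\rho\sqsubseteq_0^\infty\tau$ applied to $\rho'$ yields $\rho'',\tau'\in\fFS$ with $\rho'\leq\rho''\equiv_0^*\tau'\leq\tau$. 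Third, from $\sigma'\equiv_0^*\rho'\leq\rho''$ with $\sigma',\rho',\rho''$ all in $\FS(N)$, \refcor{swapprefix-fs} provides a sequence $\sigma^\dagger$---necessarily in $\FS(N)$, since $\rho''\in\FS(N)$---with $\sigma'\leq\sigma^\dagger\equiv_0^*\rho''$. Combining, $\sigma''\leq\sigma'\leq\sigma^\dagger$, while $\sigma^\dagger\equiv_0^*\rho''\equiv_0^*\tau'$ gives $\sigma^\dagger\equiv_0^*\tau'$ by transitivity of $\equiv_0^*$, and $\tau'\leq\tau$; hence $\sigma^\dagger,\tau'\in\fFS$ witness $\sigma''\leq\sigma^\dagger\equiv_0^*\tau'\leq\tau$, which is exactly the condition for $\sigma\sqsubseteq_0^\infty\tau$ demanded by \refdf{BD-swapping-fs-alt}.

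The only non-routine point is the third step, where a $\equiv_0^*$-chain ending at $\rho'$ has to be transported along the prefix extension $\rho'\leq\rho''$ into a $\equiv_0^*$-chain ending at the longer sequence $\rho''$, and this transport must preserve finiteness so that the resulting sequence is still an admissible witness in $\fFS$; this is exactly what \refcor{swapprefix-fs} (itself an easy consequence of \reflem{swapprefix-fs1}) delivers, so I do not anticipate a real obstacle---everything else is bookkeeping with prefixes. With reflexivity and transitivity of $\sqsubseteq_0^\infty$ in hand, the final claim that $\equiv_0^\infty$ is an equivalence relation follows as indicated in the first paragraph.
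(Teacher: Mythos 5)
Your proof is correct and follows essentially the same route as the paper: reflexivity directly from \refdf{BD-swapping-fs-alt}, and transitivity by chaining the two hypotheses and then using \refcor{swapprefix-fs} to commute the $\equiv_0^*$-step past the prefix extension $\rho'\leq\rho''$, exactly as in the paper's argument. The finiteness bookkeeping in your third step is also handled as the paper intends.
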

\begin{proof}
By definition $\sqsubseteq_0^\infty$ is reflexive.
Moreover, \refcor{swapprefix-fs}, in combination with transitivity of $\leq$ and
$\connectedto$, implies transitivity of $\sqsubseteq_0^\infty$:
  Suppose $\sigma \sqsubseteq_0^\infty \rho \sqsubseteq_0^\infty \nu$.
  To obtain $\sigma \sqsubseteq_0^\infty \nu$, let $\sigma'$ be a finite prefix of $\sigma$.
  We need to find a finite prefix $\nu'$ of $\nu$ with  $\sigma' \leq \connectedto \nu'$.
  Since $\sigma \sqsubseteq_0^\infty \rho$, there is a finite prefix $\rho'$ of $\rho$ such that
  $\sigma' \leq \connectedto \rho'$.
  Since $\rho \sqsubseteq_0^\infty \nu$, there is a finite prefix $\nu'$ of $\nu$ such that
  $\rho' \leq \connectedto \nu'$.
  So  $\sigma' \leq \connectedto\leq \connectedto \nu'$, and by \refcor{swapprefix-fs}
  we obtain
  $\sigma' \leq \leq \connectedto \connectedto \nu'$.
\qed
\end{proof}

\noindent
Now an \emph{FS-process} of a net $N$ can be defined as an
$\equiv^\infty_0$-equivalence class of possibly infinite firing
sequences of $N$ (elements of $\FSinf(N)$). Since
$\equiv^\infty_0$ is the kernel of $\sqsubseteq^\infty_0$, the latter
introduces a partial order on FS-processes, and hence a notion of a
largest FS-process.

\begin{figure}[t]
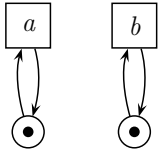

  \begin{center}
    \begin{petrinet}(5,3)
      \P(1,0):s;
      \P(3,0):p;

      \t(1,2):a:a;
      \t(3,2):b:b;
      
      \A s->a; \A a->s;
      \A p->b; \A b->p;
    \end{petrinet}
  \end{center}
  \vspace{-2ex}
  \caption{A net with two self-loops.}
  \label{fig-infiniswap}
\end{figure}

As an example, consider the net of \reffig{infiniswap} and the two infinite firing sequences
$\sigma := abababab\ldots$ and $\rho := abbabbabbabb\ldots$. Since infinitely many transitions
would need to be exchanged, $\sigma \not\connectedto \rho$. However $\sigma \equiv^\infty_0 \rho$.
To explain the direction $\sigma \sqsubseteq^\infty_0 \rho$, for any prefix $\sigma'$ of $\sigma$
with length $n$, a prefix $\rho'$ of $\rho$ with at least length
$\frac{3}{2}n$ has enough $a$s and $b$s to rearrange it such that $\sigma'$ becomes a prefix.
Swapping $n$ elements to the correct place from within a $\frac{3}{2}n$ long $\rho$ will
be possible in no more than $\frac{3}{2}n^2$ swaps.

\subsection{A bijection between FS-processes and countable BD-processes}

We now recapitulate
a result from \cite{best87both},
establishing a correspondence between the GR-processes of a countable net
and its firing sequences.

\begin{definition}[\cite{best87both}]\rm\label{df-compatible}
  Let $N$ be a net, $P = ((\SS, \TT, \FF, \MM_0), \pi) \in \GR$ and $\sigma = t_0t_1t_2\ldots \in \FSinf(N)$.
  If $\sigma$ is finite, let $I = \{i \mid \bbbn \ni i < |\sigma|\}$; otherwise let $I = \bbbn$.

  $P$ and $\sigma$ are \defitem{compatible} iff
  there is a bijection $\pos: \TT \into I$ such that
  \begin{enumerate}
    \item $\forall t \in \TT. \pi(t) = t_{\pos(t)}$
    \item $\forall t, t' \in \TT. (t, t') \in \FF^+ \implies \pos(t) < \pos(t')$.
  \end{enumerate}
\end{definition}

\begin{definition}[\cite{best87both}]\rm\label{df-lin}
  Let $N$ be a net and $P \in \GR$. 

  $\Lin(P) := \{\sigma \mid \sigma \in \FSinf(N) \text{ and $\sigma$ is compatible with } P\}$.
\end{definition}

\begin{theorem}[\cite{best87both}]\rm\label{thm-BD-infty}
Let $\sigma,\rho\mathbin\in\FS^\infty(N)$ and $P,Q \mathbin\in \GR$ such that
$\sigma \mathbin\in \Lin(P)$ and $\rho\mathbin\in\Lin(Q)$.
Then $\sigma \equiv_0^\infty \rho$ iff $P \equiv_1^\infty Q$.
\end{theorem}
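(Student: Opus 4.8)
The plan is to reduce Theorem~\ref{thm-BD-infty} to finite statements via the reformulated definitions of $\sqsubseteq_1^\infty$ (Definition~\ref{df-BD-swapping-alt}) and $\sqsubseteq_0^\infty$ (Definition~\ref{df-BD-swapping-fs-alt}), and to match them through a \emph{finite} correspondence between finite GR-processes related by $\swapeqi$ and finite firing sequences related by $\equiv_0$. Concretely, since $\equiv_1^\infty$ is the kernel of $\sqsubseteq_1^\infty$ and $\equiv_0^\infty$ the kernel of $\sqsubseteq_0^\infty$, it suffices to prove the one-directional statement: if $\sigma\in\Lin(P)$, $\rho\in\Lin(Q)$, then $\sigma\sqsubseteq_0^\infty\rho \iff P\sqsubseteq_1^\infty Q$; the theorem then follows by applying this in both directions. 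Unfolding the two preorders, what must be shown is: every finite prefix $\sigma''\leq\sigma$ can be extended to a finite $\sigma'$ with $\sigma''\leq\sigma'\equiv_0^*\rho'\leq\rho$ for some finite $\rho'$, \emph{if and only if} every finite prefix $P''\leq P$ can be extended to a finite $P'$ with $P''\leq P'\swapeqi^* Q'\leq Q$ for some finite $Q'$.

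The technical core is a dictionary between the two finite worlds. First I would record that $\Lin$ behaves well with prefixes: if $\sigma\in\Lin(P)$ and $\sigma''\leq\sigma$ is finite, then the set of transitions of $P$ occupying the first $|\sigma''|$ positions under $\pos$ is downward closed under $\FF^+$ (by condition~2 of Definition~\ref{df-compatible}), hence determines a finite prefix $P''\leq P$, and $\sigma''\in\Lin(P'')$; conversely every finite prefix $P''$ of $P$ has some $\sigma''\in\Lin(P'')$ which is a prefix of a suitable reordering of $\sigma$ — more carefully, a compatible linearisation can always be chosen so that a prescribed finite prefix of $P$ is linearised first, and this gives the matching finite prefixes. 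Second, and this is the heart, I would establish for \emph{finite} processes: $P_1\swapeqi P_2$ implies $\Lin(P_1)=\Lin(P_2)$ up to $\equiv_0^*$ in the sense that they have the same $\connectedto$-closure, and conversely $\sigma_1\equiv_0\sigma_2$ with $\sigma_1\in\Lin(P_1)$, $\sigma_2\in\Lin(P_2)$ forces $P_1\swapeqi^* P_2$. For the forward direction one checks that a single $\swap(P,p,q)$ reroutes exactly which token-copy feeds a transition without changing causal order, so the linearisations are literally the same set (the $\pos$ bijection is untouched); thus $\Lin$ is constant on $\swapeqi$-classes of finite processes, and $\Lin(P_1)=\Lin(P_2)$ whenever $P_1\swapeqi^* P_2$. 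For the converse, from $\sigma_1\adjacent\sigma_2$ one identifies the swapped adjacent transitions $t,u$ and shows that the two compatible processes $P_1,P_2$ differ precisely by a chain of $\swap$'s that reassign the postplaces of common predecessors of $t$ and $u$ — here one leans on the standard fact (provable by induction on the length of the $\connectedto$-chain, and ultimately the Best–Devillers theory) that two finite GR-processes with a common linearisation are $\swapeqi^*$-equivalent.

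With the dictionary in hand, the equivalence of the two preorders is a bookkeeping argument. Assume $\sigma\sqsubseteq_0^\infty\rho$ and take a finite prefix $P''\leq P$; choose (by the $\Lin$-prefix lemma) a finite $\sigma''\leq\sigma$ with $\sigma''\in\Lin(P'')$ — strictly, $\sigma''\in\Lin$ of the prefix of $P$ it linearises, which we can arrange to contain $P''$, so replace $P''$ by that. Apply $\sigma\sqsubseteq_0^\infty\rho$ to get finite $\sigma'\equiv_0^*\rho'\leq\rho$ with $\sigma''\leq\sigma'$. Let $P'$ be the finite prefix of $P$ linearised by $\sigma'$ (which extends $P''$ since $\sigma''\leq\sigma'$, using that $\Lin$ of a prefix determines the prefix), and $Q'$ the finite prefix of $Q$ linearised by $\rho'$; then $P'\swapeqi^* Q'$ by the converse half of the dictionary, and $P''\leq P'\swapeqi^* Q'\leq Q$, giving $P\sqsubseteq_1^\infty Q$. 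The reverse implication is symmetric, using the forward half of the dictionary ($\swapeqi^*$ preserves $\Lin$ up to $\connectedto$) to transport finite prefixes of $Q$ back to finite prefixes of $\rho$. I expect the main obstacle to be the converse half of the dictionary — showing that a single adjacency $\sigma_1\adjacent\sigma_2$ between linearisations of $P_1$ and $P_2$ forces $P_1\swapeqi^* P_2$ — since one must exhibit the explicit sequence of token-swaps realising the one-transposition of transitions, and must be careful that the intermediate processes are genuine GR-processes (acyclicity and the discreteness condition) and that the argument is uniform in which common predecessors of $t$ and $u$ carry how many token-copies; getting the $\pos$-bijections to line up across the chain of swaps is where the care is needed.
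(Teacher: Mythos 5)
Your plan is essentially the paper's own route: the paper derives Theorem~\ref{thm-BD-infty} from the preorder version (Theorem~\ref{thm-ordering}), whose proof likewise unfolds $\sqsubseteq_0^\infty$ and $\sqsubseteq_1^\infty$ into finite-prefix conditions, matches finite prefixes of firing sequences against finite prefixes of processes via Lemmas~\ref{lem-prefix-up-fs}--\ref{lem-prefix-down-GR} (including the same caveat that a finite prefix of $P$ only corresponds to a prefix of $\sigma$ up to $\equiv_0^*$), and closes the loop with the finitary correspondence $\sigma'\equiv_0^*\rho'\Leftrightarrow P'\equiv_1^* Q'$. The only real divergence is that you propose to prove that finitary correspondence yourself --- correctly identifying it as the technical heart --- whereas the paper imports it as Proposition~\ref{pr-BD} from \cite{glabbeek11ipl} and \cite{best87both}.
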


\begin{definition}\rm\label{df-countable}
A net, or a GR-process, is \emph{countable} iff it has countably many transitions.
A BD-process is countable iff it is an equivalence class of countable GR-processes.
\end{definition}
The last definition uses that if $P \equiv_1^\infty Q$, then $P$ is countable iff $Q$ is countable.
By definition a finite net is countable. Since each transition in our nets has only
finitely many pre- and postplaces, a countable net has only countably many arcs, and countably many
places, at least when not counting isolated places, which are irrelevant.

\begin{proposition}[{\cite[Construction 3.9 and Theorem 3.13]{best87both}}]\rm\label{pr-countable}
Let $N$ be a net.

For each firing sequence $\sigma\mathbin\in\FS^\infty(N)$ there exists a process $P\mathbin\in\GR$ such
that $\sigma\mathbin\in\Lin(P)$.
Moreover, for each countable process $P\mathbin\in\GR$ there exists a firing sequence
$\sigma\mathbin\in\FS^{\infty}(N)$ such that $\sigma\mathbin\in\Lin(P)$.
\end{proposition}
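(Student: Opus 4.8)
The plan is to prove the two assertions separately, each by an explicit construction that mirrors the firing rule of $N$ (this is essentially the content of Best and Devillers' Construction~3.9 and Theorem~3.13).

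\emph{From firing sequences to processes.} Given $\sigma = t_0t_1t_2\ldots\in\FSinf(N)$, I would unfold $N$ along $\sigma$, building $P=(\NN,\pi)$ step by step while maintaining, as an invariant, a finite ``cut'' $C_i\subseteq\SS$ such that for every $s\in S$ the set $C_i$ contains exactly $M_i(s)$ places mapped by $\pi$ to $s$, where $M_i$ is the marking reached in $N$ after $t_0\ldots t_{i-1}$. Start with $C_0$ consisting of $M_0(s)$ fresh $\pi$-preimages of each $s$, all with empty preset (hence in $\MM_0$). At step $i$, since $\precond{t_i}\subseteq M_i$ the cut $C_i$ has at least $F(s,t_i)$ places over each preplace $s$ of $t_i$; choose $F(s,t_i)$ of them (the only real choice in the construction), add a fresh transition $\bar t_i$ with $\pi(\bar t_i)=t_i$ and an arc from each chosen place to $\bar t_i$, add $F(t_i,s)$ fresh $\pi$-preimages of each postplace $s$ of $t_i$ with an arc from $\bar t_i$ to each, and let $C_{i+1}$ be $C_i$ with the chosen places removed and the new ones added; the invariant survives because $M_i(s)-F(s,t_i)+F(t_i,s)=M_{i+1}(s)$. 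The conditions of \refdf{process} then hold by inspection: every place gets at most one in- and one out-transition and lies in $\MM_0$ exactly when its preset is empty; $\FF$ is acyclic since an arc through a transition strictly raises the step index; the finite-past condition holds because the $\FF^+$-predecessors of $\bar t_i$ lie among $\bar t_0,\ldots,\bar t_{i-1}$; each $\bar t_i$ has a finite non-empty preset because $t_i$ does (using finite synchronisation and that every transition of $N$ has a preplace); and the defining equations for $\pi$ hold by the choice of arcs. Then $\pos(\bar t_i):=i$ is a bijection onto $I$ meeting both clauses of \refdf{compatible}, so $\sigma\in\Lin(P)$.

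\emph{From countable processes to firing sequences.} Given a countable GR-process $P=((\SS,\TT,\FF,\MM_0),\pi)$, I would first produce an enumeration $v_0,v_1,v_2,\ldots$ of $\TT$ that extends $\FF^+$ \emph{and} places every transition at a finite position, and then take $\sigma:=\pi(v_0)\pi(v_1)\pi(v_2)\ldots\,$. To get the enumeration: fix any enumeration $u_0,u_1,\ldots$ of $\TT$ (the sole use of countability), let $D_n$ be the $\FF^+$-downward closure of $\{u_0,\ldots,u_n\}$, finite by the third condition of \refdf{process}; since $D_0\subseteq D_1\subseteq\cdots$ and $\bigcup_n D_n=\TT$, list $\TT$ blockwise --- an $\FF^+$-respecting linearisation of $D_0$, then the new elements of $D_1\setminus D_0$ in $\FF^+$-respecting order, and so on --- so that every initial segment $V_k=\{v_0,\ldots,v_{k-1}\}$ is $\FF^+$-downward closed. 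Firing $v_0,\ldots,v_{k-1}$ in this order in $\NN$ from $\MM_0$ is then enabled throughout (when $v_i$ fires, each place of $\precond{v_i}$ is either initially marked or produced by an $\FF^+$-predecessor of $v_i$, which lies in $V_i$, and it has $v_i$ as its unique consumer, so it is still present), reaching some $\MM_k$ with $\MM_0\goesto[v_0\ldots v_{k-1}]\MM_k$. Applying $\pi$, using $\pi(\MM_0)=M_0$, $\pi(\precond{t})=\precond{\pi(t)}$, $\pi(\postcond{t})=\postcond{\pi(t)}$, and that $\pi$ commutes with multiset sum and with multiset difference when the subtrahend is contained in the minuend, one gets $M_0\goesto[\pi(v_0)\ldots\pi(v_{k-1})]\pi(\MM_k)$ in $N$ for all $k$, so $\sigma\in\FSinf(N)$. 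Finally $\pos(v_i):=i$ is a bijection $\TT\to I$ with $\pi(v_i)=t_{\pos(v_i)}$ and $(v_i,v_j)\in\FF^+\Rightarrow i<j$, so $\sigma\in\Lin(P)$.

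\emph{Expected obstacle.} The only subtle step is the enumeration in the second direction: a linear extension of $\FF^+$ alone need not place each transition at a finite position --- an infinite antichain could be pushed entirely ahead of some element --- so the blockwise interleaving (exploiting $\bigcup_n D_n=\TT$) is genuinely required, and this is precisely where countability of $\TT$ is indispensable. The rest --- verifying the axioms of \refdf{process} in the first direction, and transporting firings along $\pi$ in the second --- is routine.
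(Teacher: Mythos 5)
Your proposal is correct and is essentially a reconstruction of the argument the paper merely cites: the unfolding along $\sigma$ with a token-cut invariant is Best and Devillers' Construction~3.9, and the blockwise $\FF^+$-respecting enumeration of a countable $\TT$ (where countability and the finite-past condition of \refdf{process} are exactly what is needed) is their Theorem~3.13. One negligible slip: since $M_0$ may mark infinitely many places, the cut $C_i$ need not be finite, but nothing in your construction uses its finiteness, only that each step touches finitely many places.
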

Obviously, for an uncountable process $P\mathbin\in\GR$ there exists no firing sequence
$\sigma\mathbin\in\Lin(P)$.
In \cite{best87both} only countable nets are considered, and there \refthm{BD-infty}, together with
\refpr{countable}, establishes a bijection between $\equiv_0^\infty$-equivalence classes of firing
sequences and $\equiv_1^\infty$-equivalence classes of GR-processes, or, in our terminology, between
FS-processes and BD-processes. When allowing uncountable nets, we obtain a bijection
between FS-processes and countable BD-processes.

The following theorem says that this bijection preserves the order between FS- and BD-processes.
\begin{theorem}\rm\label{thm-ordering}
Let $\sigma,\rho\in\FSinf(N)$ and $P,Q \in \GR$ such that
$\sigma \in \Lin(P)$ and $\rho\in\Lin(Q)$.
Then $\sigma \sqsubseteq_0^\infty \rho$ iff $P \sqsubseteq_1^\infty Q$.
\end{theorem}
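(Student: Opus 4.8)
The plan is to prove the two implications separately, using the characterisations of the two preorders as quantifications over finite prefixes (Definitions~\ref{df-BD-swapping-alt} and~\ref{df-BD-swapping-fs-alt}) together with the finite correspondence in Theorem~\ref{thm-BD-infty} applied to \emph{finite} processes and firing sequences. The key bridge is that $\Lin$ restricts well to prefixes: if $\sigma\in\Lin(P)$ and $\sigma''$ is a finite prefix of $\sigma$, then the prefix $P''$ of $P$ determined by the transitions of $\TT$ mapped by $\pos$ into $\{0,\dots,|\sigma''|-1\}$ satisfies $\sigma''\in\Lin(P'')$; and conversely every finite prefix $P''\leq P$ arises this way from some finite prefix $\sigma''\leq\sigma$ (namely the image under $\pos$ of $\TT''$ must be a downward-closed, hence initial, segment of $I$ because of condition~2 in Definition~\ref{df-compatible}). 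I would record this as a small lemma first, since it is used in both directions.

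For the direction $P\sqsubseteq_1^\infty Q \implies \sigma\sqsubseteq_0^\infty\rho$: let $\sigma''\leq\sigma$ be a finite prefix. Pass to the corresponding finite prefix $P''\leq P$ with $\sigma''\in\Lin(P'')$. By $P\sqsubseteq_1^\infty Q$ (Definition~\ref{df-BD-swapping-alt}) there are finite $P',Q'$ with $P''\leq P'\swapeqi^* Q'\leq Q$. Now I need firing-sequence counterparts: choose $\sigma'\in\Lin(P')$ extending $\sigma''$ — this is where I must check that a linearisation of $P'$ can be chosen that \emph{starts with} $\sigma''$; this follows because $\sigma''$ linearises the prefix $P''$ of $P'$, and any linearisation of a process restricted to a prefix extends (respecting $\FF^+$) to a linearisation of the whole finite process. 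Then for $Q'\leq Q$ pick any $\rho'\in\Lin(Q')$ (Proposition~\ref{pr-countable}, since $Q'$ is finite) — and, crucially, $\rho'$ extends to a prefix $\rho^\ast\leq\rho$ only if I choose the linearisation of $Q$ compatibly; instead it is cleaner to note $\rho'\in\Lin(Q')$ and $Q'\leq Q$ give, by the prefix lemma applied to $\rho\in\Lin(Q)$, that \emph{some} finite prefix of $\rho$ lies in $\Lin(Q')$, and then use Theorem~\ref{thm-BD-infty} (or rather its finite analogue) to move $\equiv_1$-steps between $P'$ and $Q'$ across to $\equiv_0^*$-steps between $\sigma'$ and that prefix $\rho'$ of $\rho$. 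This yields $\sigma''\leq\sigma'\equiv_0^*\rho'\leq\rho$, which is exactly what Definition~\ref{df-BD-swapping-fs-alt} demands. The reverse direction is entirely symmetric, swapping the roles of $(\sigma,P)$ and $(\rho,Q)$ and of $\sqsubseteq_0^\infty$, $\sqsubseteq_1^\infty$.

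The main obstacle I anticipate is the precise handling of the ``single swap'' correspondence at the finite level: Theorem~\ref{thm-BD-infty} as stated is about $\equiv_0^\infty$ versus $\equiv_1^\infty$ on possibly-infinite objects, whereas here I need that a finite chain $P'\swapeqi^*Q'$ corresponds to a finite chain $\sigma'\equiv_0^*\rho'$ between chosen linearisations, with the endpoints controlled (one endpoint is a prescribed prefix $\sigma''$). I expect this to require either a dedicated finite-level lemma — each application of $\swap$ to a finite process corresponds, under $\Lin$, to a sequence of adjacent transpositions $\adjacent$ of a linearisation, a fact implicit in \cite{best87both} — or a careful derivation of it from Theorem~\ref{thm-BD-infty} by observing that on finite processes $\equiv_1^\infty$ collapses to $\swapeqi^*$ and $\equiv_0^\infty$ to $\connectedto$, so the theorem already gives $\sigma'\connectedto\rho'$ for suitable linearisations; then Lemma~\ref{lem-finite-swap}/Corollary~\ref{cor-swapprefix-fs} let me push the $\connectedto$-conversion into a finite prefix and re-anchor it at $\sigma''$. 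Once that finite bridge is in place, everything else is bookkeeping with prefixes, extensions, and the transitivity already proved in Proposition~\ref{pr-swapprefix-fs-preorder}.
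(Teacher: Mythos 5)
Your overall architecture matches the paper's: both directions are reduced to the finite-prefix characterisations of the two preorders (Definitions~\ref{df-BD-swapping-alt} and~\ref{df-BD-swapping-fs-alt}), bridged by prefix/extension lemmas for $\Lin$ and by the finitary correspondence ($P'\swapeqi^* Q'$ iff compatible finite linearisations are $\connectedto$-related), with \refcor{swapprefix-fs} used to re-anchor conversions at a prescribed prefix. However, there is one genuine error in your key bridging lemma. You claim that every finite prefix $P''\leq P$ arises from a finite prefix $\sigma''\leq\sigma$ with $\sigma''\in\Lin(P'')$, arguing that $\pos(\TT'')$ must be an initial segment of $I$ because $\TT''$ is downward closed under $\FF^+$ and $\pos$ respects $\FF^+$. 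This implication goes the wrong way: condition~2 of \refdf{compatible} only forces causally ordered transitions to appear in order in $\sigma$; it does not prevent $\sigma$ from interleaving transitions of $\TT''$ with causally unrelated transitions outside $\TT''$. Concretely, if $P$ consists of two parallel transitions $a$ and $b$ and $\sigma=ab$, the prefix $P''$ containing only $b$ has $\pos(\TT'')=\{1\}$, which is not an initial segment, and no prefix of $ab$ linearises $P''$ (the paper flags exactly this pitfall before \reflem{prefix-down-GR}). You rely on this false converse in both directions: once to produce ``some finite prefix of $\rho$ in $\Lin(Q')$'' from $Q'\leq Q$, and again (via your appeal to symmetry) to produce a prefix of $\sigma$ linearising a given finite $P''\leq P$.

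The fix is the weaker statement the paper proves as \reflem{prefix-down-GR}: given $P''\leq P$ and $\sigma\in\Lin(P)$, there is \emph{some} $\sigma''\in\Lin(P'')$ together with $\sigma_1,\sigma_2\in\fFS$ such that $\sigma''\leq\sigma_1\equiv_0^*\sigma_2\leq\sigma$, i.e.\ a linearisation of $P''$ that extends to something $\equiv_0^*$-equivalent to a genuine prefix of $\sigma$. It is obtained by taking a prefix $\sigma_2$ of $\sigma$ long enough to cover $\pos(\TT'')$, forming the induced prefix $P_2\leq P$ with $P''\leq P_2$, extending a linearisation of $P''$ to one of $P_2$, and comparing the two linearisations of $P_2$ via the finitary correspondence. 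With this replacement your chains still close up, at the cost of one extra application of \refcor{swapprefix-fs} and transitivity of $\leq$ and $\equiv_0^*$; the rest of your plan, including collapsing $\equiv_1^\infty$ to $\swapeqi^*$ and $\equiv_0^\infty$ to $\connectedto$ on finite objects, is exactly what the paper does.
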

Together with \refpr{countable} this theorem establishes an order-preserving bijection between the
FS-processes and the countable BD-processes of a net. Consequently, a countable net has a largest
BD-process iff it has a largest FS-process.

Although the preorders $\sqsubseteq_0^\infty$ and $\sqsubseteq_1^\infty$ are not considered in \cite{best87both},
the proof of \refthm{BD-infty} in \cite{best87both} can be adapted in a fairly straightforward way to yield
a proof of \refthm{ordering} as well. A more detailed proof of \refthm{ordering}, and
thereby also of \refthm{BD-infty}, using the notation of the present paper, is presented below.

\subsection{This bijection preserves the order between processes}

The next three lemmas say that if a process $P$ is compatible with a firing sequence $\sigma$,
then\vspace{-1ex}
\begin{itemize}
\item any finite extension of $\sigma$ can be matched by a compatible extension of $P$,
\item any finite prefix of $\sigma$ can be matched by a compatible prefix of $P$, and
\item any finite extension of $P$ can be matched by a compatible extension of $\sigma$.
\end{itemize}

\begin{lemma}\rm\label{lem-prefix-up-fs}
Let $P'' \mathbin\in \fGR$, $\sigma'' \mathbin\in \Lin(P'')$ and
$\sigma'\in\fFS$ with $\sigma''\leq \sigma'$.
Then there is $P'\in\fGR$ with $\sigma'\mathbin\in\Lin(P')$ and $P''\leq P'$.
\end{lemma}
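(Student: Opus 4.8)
The plan is to build $P'$ by adding to $P''$ exactly one transition $t$, namely the one labelled by the last symbol of $\sigma'$, together with the appropriate post-places, and to verify that the extended process is compatible with $\sigma'$. First I would dispose of the trivial case $\sigma'' = \sigma'$, for which $P' := P''$ works. Otherwise, write $\sigma' = \sigma'' u_1 u_2 \cdots u_k$ with $k \geq 1$; by induction on $k$ it suffices to treat $k = 1$, so assume $\sigma' = \sigma'' u$ with $M_0 \goesto[\sigma''] M \goesto[\{u\}] M'$. Let $\pos: \TT \into I''$ be the bijection witnessing $\sigma'' \in \Lin(P'')$, where $I'' = \{i \mid i < |\sigma''|\}$. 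Since $u$ is enabled at the marking $M$ reached after $\sigma''$, and $M$ is exactly the marking $\pi(P''^\circ)$ corresponding to the end-places of $P''$ (this is the standard fact that the end of a process compatible with $\sigma''$ maps onto the marking reached by $\sigma''$, which I would state and justify from \refdf{compatible} and \refdf{firing}), we can choose for each preplace requirement $F(s, u)$ a multiset of $F(s,u)$ distinct places in $P''^\circ \cap \pi^{-1}(s)$; there are enough of them precisely because $\precond{u} \subseteq M$.

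Concretely, I would construct $P' = ((\SS', \TT', \FF', \MM_0), \pi')$ as follows: add a fresh transition $t$ to $\TT$; add fresh places for $\postcond{u}$, i.e.\ for each $s \in S$ add $F(u,s)$ new places mapped by $\pi'$ to $s$; set $\FF'$ so that $\precond{t}$ is the chosen multiset of end-places of $P''$ (so $\pi'(\precond t) = \precond{u} = {}^\bullet\pi'(t)$) and $\postcond t$ is the set of new places (so $\pi'(\postcond t) = \postcond u$); extend $\pi$ by $\pi'(t) = u$ and the assignment on the new places; keep $\MM_0$ unchanged. One checks routinely that $P'$ satisfies all clauses of \refdf{process}: the new places have $|\precond{\cdot}| = 1$, $|\postcond{\cdot}| = 0$ and are not initial; $\FF'$ stays acyclic since $t$ is a new maximal element; the finite-predecessor condition is inherited. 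That $P'' \leq P'$ is immediate from the construction, since we only added elements and the restriction of $\FF'$ and $\pi'$ to the old net gives back $\FF$ and $\pi$. Finally, to see $\sigma' \in \Lin(P')$ extend $\pos$ to $\pos'$ by $\pos'(t) := |\sigma''|$, the new last index; clause~1 of \refdf{compatible} holds since $\pi'(t) = u = t_{|\sigma''|}$, and clause~2 holds because any $\FF'^+$-predecessor of $t$ lies in $\TT$ and hence already has $\pos$-value $< |\sigma''| = \pos'(t)$, while no new ordering among old transitions is created.

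The main obstacle I anticipate is the bookkeeping needed to justify that the marking $M$ reached by $\sigma''$ equals $\pi(P''^\circ)$ as multisets, so that ``$u$ enabled at $M$'' genuinely translates into ``enough end-places of the right $\pi$-image are available in $P''$''. This requires a small lemma: for any $P'' \in \fGR$ and $\sigma'' \in \Lin(P'')$ we have $\pi(P''^\circ) = M$ where $M_0 \goesto[\sigma''] M$. This is proved by induction on $|\sigma''|$ using the definitions, and although it is conceptually routine it is the place where one must be careful with multiset arithmetic (tokens produced and consumed, arc weights counted with multiplicity). Everything else — freshness of the added places, preservation of acyclicity, the extension of $\pos$ — is bookkeeping that I would carry out but not belabour.
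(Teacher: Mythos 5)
Your proposal is correct and follows essentially the same route as the paper's proof: reduce by induction to appending a single transition, use the fact that the marking reached after $\sigma''$ equals $\pi(P''^\circ)$ to select preplaces among the end-places of $P''$, add fresh postplaces, and extend the compatibility bijection by assigning the new transition the last index. The only difference is that the paper obtains the key fact $M = \pi(P''^\circ)$ by citing Construction~3.9, Proposition~3.10 and Theorem~3.13 of Best and Devillers, whereas you propose to prove it directly by induction, which is fine.
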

\begin{proof}
  We have that $\sigma'' \alpha = \sigma'$ for some $\alpha \in T^*$.
  Using induction over the length of $\alpha$
  we need to prove the claim only for $\sigma'' t = \sigma'$ for arbitrary $t \in T$.
  Let $P''=((\TT'', \SS'', \FF'', \MM_0''), \pi'')$.
  From $\sigma' \in \FS(N)$ follows $M_0 \goesto[\sigma''] M_1 \goesto[t]$.
  From $\sigma'' \in \Lin(P'')$ we have that $M_1 = \pi''(P''^\circ)$
  via Construction~3.9, Proposition~3.10, and Theorem~3.13 of \cite{best87both}.
  Hence $\precond{t} \subseteq M_1 = \pi''(P''^\circ)$.
  From $P''^\circ$ we select a set of \hbox{(pre-)}places $A$ with $\pi''(A) = \precond{t}$ and
  we create a set of fresh (post-)places $B$ together with a function $\pi_B: B \into S$ such that
  $\pi_B(B) = \postcond{t}$.

  We define $P'$ as $((\TT', \SS', \FF', \MM_0'), \pi') := ((\TT'' \cup \{t'\}, \SS'' \cup B,
  \FF'' \cup \{(a, t', 1) \mid a \in A\} \cup \{(t', b, 1) \mid b \in B\},
  \MM_0''), \pi'' \cup \{(t', t)\} \cup \pi_B)$.
  We need to show that $P'$ is a process of $N$, $P'' \leq P'$ and $\sigma' \in \Lin(P')$.

  ``$P'$ is a process of $N$'':
  $\forall s \in \SS'. |\precond{s}| \leq 1$ as the only new entries in $\FF'$ which lead to places are for
  the new places from $B$, where it holds.
  $\forall s \in \SS'. 1 \geq |\postcond{s}|$ as the only new entries in $\FF'$ which lead from places are for
  places $s$ from $P''^\circ$ for which $|\postcond{s}| = 0$ in $P''$.
  Additionally $\MM_0'' = \MM_0'$.
  $\FF'$ is acyclic as the new entries all contain $t'$ and $B$ is disjoint from $\SS''$.
  $\{t \in \TT' \mid (t, u) \in \FF'^+\}$ is finite for all $u \in \TT'$
  since $\TT''$ and hence $\TT'$ is finite.
  $\pi'(\MM_0') = \pi''(\MM_0'') = M_0$ as $B$ is distinct from $\SS''$, hence $\pi_B$ contributes nothing to $\pi'(\MM_0')$.
  Finally
  $\pi'(\precond{t'}) = \pi'(A) = \pi''(A) = \precond{t} = \precond{\pi'(t')}$
  and
  $\pi'(\postcond{t'}) = \pi'(B) = \pi_B(B) = \postcond{t} = \postcond{\pi'(t')}$.
  Hence $P'$ is indeed a process.

  ``$P'' \leq P'$'': As $P'$ was constructed from $P''$ using disjoint unions, this follows immediately.

  ``$\sigma' \in \Lin(P')$'': From $\sigma'' \in \Lin(P'')$ we get a $\pos''$ function. We define
  $\pos' := \pos'' \cup (t', |\sigma'| - 1)$. Checking
  \refdf{compatible} we find $\pi'(t') = t$, which is the last transition in
  $\sigma'$.
  As $|\sigma'| - 1$ is one larger than the largest value of $\pos''$, $\pos'$ is a bijection, and since there is no
  $u' \in \TT'$ with $(t', u') \in \FF'^+$ we conclude that $\sigma'$ is compatible with $P'$.
  \qed
\end{proof}

\begin{lemma}\rm\label{lem-prefix-down-fs}
Let $P \mathbin\in \GR$, $\sigma \in \Lin(P)$ and
$\sigma''\in\fFS$ with $\sigma''\leq \sigma$.
Then there is $P''\in\fGR$ with $\sigma''\mathbin\in\Lin(P'')$ and $P''\leq P$.
\end{lemma}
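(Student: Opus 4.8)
The plan is to read $P''$ off directly from the compatibility data relating $\sigma$ and $P$, since here we only need an already-existing prefix of $P$, not a fresh construction (unlike in \reflem{prefix-up-fs}). Let $\pos\colon\TT\to I$ be the bijection witnessing $\sigma\in\Lin(P)$ as in \refdf{compatible}, and put $n=|\sigma''|$. Since $\sigma''\leq\sigma$ we have $|\sigma|\geq n$, hence $\{0,1,\dots,n-1\}\subseteq I$, and I would define $\TT'':=\pos^{-1}(\{0,\dots,n-1\})$, a set of exactly $n$ transitions.

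First I would check that $\TT''$ is downward closed under $\FF^+$: if $u\in\TT''$ and $(t,u)\in\FF^+$, then condition~2 of \refdf{compatible} gives $\pos(t)<\pos(u)<n$, so $t\in\TT''$. By the remark following \refdf{extension}, $\TT''$ then uniquely determines a prefix $P''=((\SS'',\TT'',\FF'',\MM_0''),\pi'')$ of $P$ with $P''\leq P$; and $P''\in\fGR$ because $\TT''$ is finite. (That $\SS''$ may be infinite when $P$ has infinitely many initial places is irrelevant, since finiteness of a GR-process only concerns its transition set.)

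It then remains to verify $\sigma''\in\Lin(P'')$, for which I would take $\pos'':=\pos\restrictedto\TT''$; by construction this is a bijection from $\TT''$ onto $\{0,\dots,n-1\}=\{i\mid i<|\sigma''|\}$. Condition~1 of \refdf{compatible} holds because $\pi''(t)=\pi(t)=t_{\pos(t)}$ and, as $\pos(t)<n$ and $\sigma''$ is a length-$n$ prefix of $\sigma$, the $\pos(t)$-th letter of $\sigma$ coincides with the $\pos''(t)$-th letter of $\sigma''$. Condition~2 holds because $P''\leq P$ forces $\{(x,y)\mid\FF''(x,y)>0\}\subseteq\{(x,y)\mid\FF(x,y)>0\}$, hence $\FF''^+\subseteq\FF^+$, so $(t,t')\in\FF''^+$ implies $(t,t')\in\FF^+$ and therefore $\pos''(t)=\pos(t)<\pos(t')=\pos''(t')$.

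I do not expect any substantial obstacle: this is the easy counterpart of \reflem{prefix-up-fs}, and the only points deserving attention are that $\{0,\dots,n-1\}$ genuinely lies inside the index set $I$ (which follows from $\sigma''\leq\sigma$, and ensures $\TT''$ has the right cardinality and $\pos''$ the right codomain) and the routine monotonicity $\FF''^+\subseteq\FF^+$ under passing to a prefix.
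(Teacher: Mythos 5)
Your construction is the same as the paper's: the paper removes the transitions $\pos^{-1}(\alpha)$ corresponding to the suffix $\alpha$ with $\sigma''\alpha=\sigma$ (together with their postplaces), which yields exactly your $\TT''=\pos^{-1}(\{0,\dots,n-1\})$, and then restricts $\pos$ to witness compatibility. The only cosmetic difference is that you invoke the remark after \refdf{extension} to get prefix-hood from downward closure of $\TT''$ under $\FF^+$, whereas the paper verifies the process conditions (in particular $\pi({}^\star t)=\precond{\pi(t)}$ and $\pi(t^\star)=\postcond{\pi(t)}$) by hand; both are sound.
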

\begin{proof}
  To be precise,
  let, in this proof only,
  ${}^\bullet x$, $x^\bullet$ denote pre- respectively post-sets in $N$,
  ${}^\circ x$, $x^\circ$ denote pre- respectively post-sets in $P$, and
  ${}^\star x$, $x^\star$ denote pre- respectively post-sets in $P''$.

  We have that $\sigma'' \alpha = \sigma$ for some $\alpha \in T^* \cup T^\omega$.

  Let $P = ((\SS, \TT, \FF, \MM_0), \pi)$.
  From $\sigma \in \Lin(P)$ we get a bijection $\pos$ between $\TT$ and the elements of $\sigma$.
  As $\alpha \subseteq \sigma$, we can take the preimage $\phi := \pos^{-1}(\alpha)$.
  We define $\SS'' := \SS \setminus \phi^\circ$ and $\TT'' := \TT \setminus \phi$ and
  take $$P'' = ((\SS'', \TT'', \FF \restrictedto (\SS'' \times \TT''
  \cup \TT'' \times \SS''), \MM_0), \pi \restrictedto (\SS'' \cup \TT'')).$$
  We need to show that $P''$ is a finite process of $N$, $P'' \leq P$ and $\sigma'' \in \Lin(P'')$.

  ``$P''$ is a process of $N$'':
  As elements were only removed from $P$ and none of them were from $\MM_0$, all clauses of \refdf{process} but the last follow.
  It remains to be shown that for all $t \in \TT''$ we have $\pi({}^\star t) = \precond{\pi(t)} \wedge
  \pi(t^\star) = \postcond{\pi(t)}$. By processhood of $P$ we already have $\pi({}^\circ t) = \precond{\pi(t)} \wedge
  \pi(t^\circ) = \postcond{\pi(t)}$.

  ``$\pi({}^\star t) = \precond{\pi(t)}$'': By ${}^\star t = {}^\circ t$, as follows:
  Take any $s \in {}^\circ t$. If we had $s \in t'^\circ$ for any $t' \in \phi$, then
  $\pos(t) < |\sigma''|$ and $\pos(t') \geq |\sigma''|$ (from their order in $\sigma$)
  but also $(t', t) \in \FF^+$ and thus $\pos(t') < \pos(t)$ (from compatibility of $P$ and $\sigma$).
  Hence we cannot have such a $t'$. Thus $s \not\in \phi^\circ$, $s \in \SS''$ and $s \in {}^\star t$.

``$\pi(t^\star) = \postcond{\pi(t)}$'': By $t^\star = t^\circ$, as follows:
  Take any $s \in t^\circ$. As $t \not\in\phi$ and $|{}^\circ s| \leq 1$ for all $s \in \SS$, we have $s \not\in \phi^\circ$,
  $s \in \SS''$ and $s \in t^\star$.

  ``$P$ is finite'': This follows since $\TT''=\pos^{-1}(\sigma'')$ and $\sigma''$ is finite.

  ``$P'' \leq P$'': This follows immediately from the construction of $P''$.

  ``$\sigma'' \in \Lin(P'')$'':
  Using $\text{pos} \restrictedto \TT''$ it follows that $\sigma''$ is compatible with $P''$.
  \qed
\end{proof}

\begin{lemma}\rm\label{lem-prefix-up-GR}
Let $P'',P' \mathbin\in \fGR$ with $P''\leq P'$, and let $\sigma''
\mathbin\in \Lin(P'')$.
Then there is a $\sigma_0\mathbin\in\Lin(P')$ with $\sigma'' \leq \sigma_0$.
\end{lemma}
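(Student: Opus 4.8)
The plan is to construct $\sigma_0$ by running through $\sigma''$ first and then appending the $\pi'$-images of the remaining transitions $\TT'\setminus\TT''$, listed in an order extending their causal order in $P'$. Since $P'$ is finite, $\FF'^+$ is a strict partial order on the finite set $\TT'$, so I would pick an enumeration $u_1u_2\cdots u_k$ of $\TT'\setminus\TT''$ with $(u_i,u_j)\in\FF'^+\Rightarrow i<j$, and set $\sigma_0:=\sigma''\,\pi'(u_1)\,\pi'(u_2)\cdots\pi'(u_k)$. Because $P''\leq P'$ is a prefix, no transition of $\TT''$ is $\FF'^+$-after any transition of $\TT'\setminus\TT''$, so the total order on $\TT'$ given by ``$\sigma''$ on $\TT''$, then $u_1,\dots,u_k$'' is a linear extension of $\FF'^+\restrictedto\TT'$.

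Next I would verify the two clauses of \refdf{compatible} for the candidate $\pos'$ that agrees with the witnessing $\pos''$ of $\sigma''\in\Lin(P'')$ on $\TT''$ and sends $u_i$ to position $|\sigma''|+i-1$. It is a bijection onto $\{0,\dots,|\TT'|-1\}$ since $|\TT'|=|\TT''|+k$ and $|\sigma''|=|\TT''|$; clause~1 ($\pi'(t)=(\sigma_0)_{\pos'(t)}$) is immediate from the construction together with $\pi'\restrictedto\TT''=\pi''$ and $\sigma''$ being the length-$|\TT''|$ prefix of $\sigma_0$. For clause~2 I would split on where $t,t'$ lie when $(t,t')\in\FF'^+$: if both are in $\TT'\setminus\TT''$ it holds by the chosen enumeration; if $t\in\TT''$, $t'\in\TT'\setminus\TT''$ it holds since $\pos'(t)<|\sigma''|\leq\pos'(t')$; the case $t\in\TT'\setminus\TT''$, $t'\in\TT''$ cannot occur because a prefix is downward closed under $\FF'^+$ (the remark after \refdf{extension}); and if $t,t'\in\TT''$, a shortest $\FF'$-path from $t$ to $t'$ consists of $\FF'^+$-predecessors of $t'\in\TT''$, hence lies in $P''$, so $(t,t')\in\FF''^+$ and $\pos''(t)<\pos''(t')$ by compatibility of $\sigma''$ with $P''$.

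The remaining, most delicate point is that $\sigma_0\in\FSinf(N)$ actually is a firing sequence, and here I would argue exactly as in the proof of \reflem{prefix-up-fs}, invoking Construction~3.9, Proposition~3.10 and Theorem~3.13 of \cite{best87both}: $M_0\goesto[\sigma'']\pi''(P''^\circ)$, and an induction on $i$ shows that after firing $\sigma''\pi'(u_1)\cdots\pi'(u_{i-1})$ the reached marking is the $\pi'$-image of the cut of $P'$ after the downward-closed set $\TT''\cup\{u_1,\dots,u_{i-1}\}$; since $u_i$ is $\FF'^+$-minimal in the complement, each of its preplaces lies in that cut (using $|\precond s|\leq1\geq|\postcond s|$ in the process net), so $\precond{\pi'(u_i)}\subseteq$ the reached marking and firing $\pi'(u_i)$ yields the image of the next cut; after $u_k$ one reaches $\pi'(P'^\circ)$. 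I expect this step to be the main obstacle: the bookkeeping that $\pi'$ commutes with the multiset operations defining a step, in particular with the subtraction $M-\precond{\cdot}$, is routine but must be done carefully, whereas all the compatibility checks above are straightforward.
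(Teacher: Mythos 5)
Your proposal is correct and takes essentially the same approach as the paper: the paper organizes the construction as an induction on $|\TT'|$, appending one new (causally maximal) transition $t'$ at a time with $\sigma_0=\sigma''\pi'(t')$, which is exactly your linear-extension construction unrolled, and it justifies the firing-sequence claim the same way, via $\precond{t'}\subseteq P''^\circ$ and the fact (Construction~3.9, Proposition~3.10, Theorem~3.13 of Best--Devillers) that $M_0\goesto[\sigma'']\pi''(P''^\circ)$. The compatibility checks you list match the paper's case analysis, so no gap remains.
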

\begin{proof}
  Let $P' = ((\SS', \TT', \FF', \MM_0'), \pi')$ and $P'' = ((\SS'', \TT'', \FF'', \MM_0''), \pi'')$.
  Applying induction over the number of transitions in $\TT'$, we can restrict attention
  to the case where $\TT' = \TT'' \cup \{t'\}$ for one new transition $t'$.

  We take $\sigma_0 = \sigma'' \pi'(t')$ and need to show that $\sigma_0 \mathbin\in\Lin(P')$ (for
  by construction $\sigma'' \leq \sigma_0$).
  As $\sigma'' \mathbin\in \Lin(P'')$, it is compatible with $P''$, so there exists
  a bijection $\text{pos}'': \TT'' \into \{0, \ldots, |\sigma''| - 1\}$ as per \refdf{compatible}.

  We define $\text{pos}_0: \TT' \into \{0, \ldots, |\sigma_0| - 1\}$ as
  $\text{pos}_0(t) := \text{pos}''(t)$ iff $t \ne t'$ and
  $\text{pos}_0(t') := |\sigma_0| - 1 = |\sigma''|$ and need to show that $\sigma_0$ is
  \hyperref[df-compatible]{compatible} with $\PP'$:
  \begin{enumerate}
    \item For all $t''\in\TT''$ we have $\pi'(t'')=\pi''(t'')=t_{{\rm pos}''(t'')}=t_{{\rm pos}_0(t'')}$.
      Furthermore, by construction $t_{\text{pos}_0(t')} = t_{|\sigma_0| - 1} = \pi'(t')$.
    \item For all $u, u' \in \TT'$ with $(u, u') \in {\FF'}^+$ we need to show $\text{pos}_0(u) < \text{pos}_0(u')$.
      If $u \ne t' \ne u'$ then this follows from the compatibility of $\text{pos}''$.
      For $u = t'$ there cannot be any $(u, u') \in {\FF'}^+$ because $t'$ was added last in an extension to a process.
      If $u' = t'$ we find that by definition $(t', t') \not\in {\FF'}^+$, and for all other possible $u$,
      $\text{pos}_0(u) = \text{pos}''(u) \leq |\sigma''| - 1 < |\sigma''| = \text{pos}_0(t')$.
  \end{enumerate}
  Finally, we show that $\sigma_0 \mathbin\in \FS(N)$.\vspace{2pt}
  Since $P'' \leq P'$ we have $\precond{t'} \subseteq P''^\circ$.
  So $\precond{\pi'(t')} = \pi'(\precond{t'}) \subseteq \pi'(P''^\circ) = \pi''(P''^\circ)$.
  Moreover, as $\sigma'' \mathbin\in \Lin(P'')$ we have \plat{$M_0 \production{\sigma''} M$},
  where $M=\pi''(P''^\circ)$ via Construction~3.9,\vspace{1pt} Proposition~3.10, and Theorem~3.13
  of \cite{best87both}. Hence $M\goesto[\pi'(t')]$ and $\sigma_0 \mathbin\in \FS(N)$.
  \qed
\end{proof}
In line with the last three lemmas, one might expect that
if a process $P$ is compatible with a firing sequence $\sigma$,
then any finite prefix of $P$ can be matched by a compatible prefix of $\sigma$.
This, however, is obviously false. 
Take for instance a process $P$ consisting of two parallel transitions $a$ and $b$, with the
compatible firing sequence $ab$; now the prefix of $P$ containing only the transition $b$ has no
compatible prefix of $ab$.
The following is our best approximation.

\begin{lemma}\rm\label{lem-prefix-down-GR}
Let $P'' \mathbin\in \fGR$ and $P \mathbin\in \GR$ with
$P''\leq P$, and let $\sigma \mathbin\in \Lin(P)$.
Then there are $\sigma''\mathbin\in\Lin(P'')$ and
$\sigma_1,\sigma_2\in\fFS$ with
$\sigma'' \leq \sigma_1 \equiv_0^* \sigma_2 \leq \sigma$.
\end{lemma}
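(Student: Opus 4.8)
The plan is to replace the possibly-infinite process $P$ by a finite one and then permute a suitable prefix of $\sigma$. Let $\pos\colon\TT\into I$ be the bijection witnessing $\sigma\in\Lin(P)$; if $\TT''=\emptyset$ take $\sigma''=\sigma_1=\sigma_2$ the empty sequence, so assume $\TT''\ne\emptyset$. Since $\TT''$ is finite, pick $n$ strictly above every value of $\pos$ on $\TT''$ and let $\sigma_2$ be the length-$n$ prefix of $\sigma$; then $\sigma_2\in\fFS$, $\sigma_2\le\sigma$, and $\sigma_2$ contains $\pi(t)$ for every $t\in\TT''$. Applying \reflem{prefix-down-fs} to $\sigma_2$ yields a finite process $P_1$ with $\sigma_2\in\Lin(P_1)$ and $P_1\le P$; by the construction in that proof the transition set of $P_1$ is $\TT_1=\{t\in\TT\mid\pos(t)<n\}\supseteq\TT''$. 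As $\TT''$ and $\TT_1$ are both downward closed under $\FF^+$, and the transitive closure of the restricted flow relation of $P_1$ is contained in $\FF^+$, the set $\TT''$ is downward closed inside $P_1$ as well and hence determines the prefix $P''$ of $P_1$; so $P''\le P_1$. It therefore suffices to find $\sigma''\in\Lin(P'')$ and $\sigma_1\in\fFS$ with $\sigma''\le\sigma_1\connectedto\sigma_2$, now that $P_1$ is finite and $\sigma_2\in\Lin(P_1)$.

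For this I would colour each of the $n$ positions of $\sigma_2$ \emph{black} if the process transition occupying it (that is, $\pos^{-1}$ of that position) lies in $\TT''$, and \emph{white} otherwise, and stably rearrange $\sigma_2$ so that all black positions come before all white ones, using only transpositions of a white position immediately followed by a black one. Each such transposition destroys exactly one white-before-black pair, so the procedure stops after finitely many steps at a word $\sigma_1=\sigma''\mu$ in which $\sigma''$ lists the black transitions in their original relative order and $\mu$ the white ones. Then $\sigma''\in\Lin(P'')$, witnessed by the $\pos$ obtained by ranking $\pos$ within $\pos(\TT'')$: the images are right by construction, and causality is inherited since $\FF''^+\subseteq\FF^+$, so $\pos$-order is preserved. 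By construction $\sigma''\le\sigma_1$; and as soon as every transposition used is a genuine instance of $\adjacent$ we get both $\sigma_1\in\fFS$ and $\sigma_1\connectedto\sigma_2$.

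Legitimacy of the transpositions rests on a concurrency lemma plus a bookkeeping remark. The remark: throughout the rearrangement every white transition only moves rightwards and every black only leftwards, so whenever in an intermediate word a white $\pi(u)$ is immediately followed by a black $\pi(v)$, then $u$ already preceded $v$ in $\sigma_2$, i.e.\ $\pos(u)<\pos(v)$; hence $(v,u)\notin\FF^+$ by compatibility of $\sigma_2$ with $P_1$, while $(u,v)\notin\FF^+$ since $v\in\TT''$ and $\TT''$ is downward closed, so $u$ and $v$ are concurrent in $P_1$. The concurrency lemma is: if $Q$ is a process, $a\ne b$ are concurrent transitions of $Q$, and a linearisation $\tau\in\Lin(Q)$ has $a$ immediately before $b$, say $\tau=\tau_1\,\pi(a)\,\pi(b)\,\tau_2$ with $a,b$ at positions $|\tau_1|,|\tau_1|+1$, then $M_0\goesto[\tau_1]\goesto[\{\pi(a),\pi(b)\}]$, so $\tau\adjacent\tau_1\,\pi(b)\,\pi(a)\,\tau_2$, and the swapped word lies again in $\Lin(Q)$. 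I would prove it with the Best--Devillers correspondence between reachable markings and B-cuts (Construction~3.9, Proposition~3.10 and Theorem~3.13 of \cite{best87both}), the same machinery used in \reflem{prefix-up-GR}: the marking $M$ after $\tau_1$ equals $\pi(C)$, where $C$ is the B-cut of $Q$ corresponding to the transitions of $\tau_1$; the preplaces of $a$ lie in $C$ (their producers precede $a$ causally and so are fired in $\tau_1$, while their unique consumer $a$ is not), and likewise the preplaces of $b$ (using $a\ne b$ and concurrency to keep $a$ from being a producer of a preplace of $b$); the preplaces of $a$ and of $b$ are disjoint in $Q$, a common one having two output transitions against $|\postcond s|\le1$; hence $\precond{\pi(a)}+\precond{\pi(b)}=\pi(\precond{a}\cup\precond{b})\subseteq\pi(C)=M$. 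That the swapped word is still in $\Lin(Q)$ is routine: firing $\{\pi(a),\pi(b)\}$ at $M$ reaches the same marking as $\pi(a)\pi(b)$, so it is a firing sequence, and it is compatible with $Q$ via the $\pos$ of $\tau$ with the values at $a$ and $b$ interchanged, the only comparabilities touched being $(a,b)$ and $(b,a)$, both absent from $\FF_Q^+$.

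I expect this concurrency lemma, together with the care needed to see that every transposition performed during the stable sort really is between concurrent transitions, to be the main obstacle; the surrounding reduction and prefix-chasing mirror the three preceding lemmas. As an alternative to the explicit sort one may instead pick any $\sigma''\in\Lin(P'')$ via \refpr{countable}, lift it through \reflem{prefix-up-GR} to some $\sigma_1\in\Lin(P_1)$ with $\sigma''\le\sigma_1$, and then invoke that any two linearisations of a finite process are $\connectedto$-equivalent --- a fact proved by induction on the number of transitions, bubbling an $\FF^+$-maximal transition to the end, which again reduces to the same concurrency lemma.
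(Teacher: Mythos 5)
Your proposal is correct, and its outer skeleton coincides with the paper's: cut $\sigma$ at a position beyond all of $\pos(\TT'')$ to get $\sigma_2\in\fFS$, pass to the corresponding finite prefix process ($P_1$ in your notation, $P_2$ in the paper's) with $\sigma_2$ compatible and $P''\leq P_1\leq P$. Where you diverge is the final step. The paper takes an arbitrary linearisation $\sigma''\in\Lin(P'')$, lifts it through \reflem{prefix-up-GR} to some $\sigma_1\in\Lin(P_2)$ with $\sigma''\leq\sigma_1$, and then simply cites Theorem~7.10 of \cite{best87both} (any two linearisations of the same finite process are $\equiv_0^*$-equivalent) to get $\sigma_1\equiv_0^*\sigma_2$ — which is exactly the alternative you sketch in your last sentence. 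Your primary route instead manufactures $\sigma_1$ by a stable bubble sort of $\sigma_2$ that moves the $\TT''$-transitions to the front, justifying each adjacent transposition by a concurrency lemma (concurrent transitions adjacent in a linearisation can be swapped within $\Lin$), proved from the marking/B-cut correspondence of \cite{best87both} that the paper also invokes in \reflem{prefix-up-fs} and \reflem{prefix-up-GR}; the bookkeeping observation that whites only move right and blacks only left, so every swapped pair was already ordered white-before-black in $\sigma_2$ and hence causally unordered in $P_1$, is the right way to see that every transposition is legitimate. What this buys is self-containment: you avoid both \reflem{prefix-up-GR} and the imported Theorem~7.10, effectively re-proving the needed instance of the latter, and you get $\sigma''$ for free as the black prefix of the sorted word (its compatibility with $P''$ following, as you say, because the sort preserves the relative order of the black transitions and $\FF''^+\subseteq\FF_1^+$). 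The price is the extra concurrency lemma and the sort analysis, which the paper's shorter citation-based argument sidesteps; both are sound.
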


\begin{proof}
  Let $P = ((\SS, \TT, \FF, \MM_0), \pi)$ and $P'' = ((\SS'', \TT'', \FF'', \MM_0), \pi'')$.
  From $\sigma \in \Lin(P)$ we have a bijection $\text{pos}$ between $\TT$ and the indices of $\sigma$.
  Every finite process can be linearised to a firing sequence. Hence there exists some $\sigma'' \in \Lin(P'')$.
  Thence we obtain a bijection $\text{pos}'': \TT'' \into \{0 \ldots |\sigma''| - 1\}$.
  Since $P'' \leq P$ we find $\text{pos}''^{-1}$ to be an injection $\{0 \ldots |\sigma''| - 1\} \into \TT$.
  As $\sigma''$ is finite, $j_{\text{max}} := \text{max}_{i \in \{0 \ldots |\sigma''| - 1\}} \text{pos}(\text{pos}''^{-1}(i))$ exists.
  Let $\sigma_2$ be the prefix of $\sigma$ of length $j_{\text{max}}+1$.
  Then $\text{pos}^{-1}(\sigma_2)$ selects a set of transitions from $\TT$ which together with
  the connecting places forms a prefix $P_2 \leq P$ (cf.\ \reflem{prefix-down-fs}).
  Let $P_2 = ((\SS_2, \TT_2, \FF_2, \MM_0), \pi_2)$.
  As $\sigma_2$ was chosen long enough, we find
  $\text{pos}\circ\text{pos}''^{-1}$ to be an injection from $\{0 \ldots |\sigma''| - 1\}$ into
  $\{0 \ldots |\sigma_2| - 1\}$ and hence
  $\text{pos}^{-1}\circ\text{pos}\circ\text{pos}''^{-1}$ to be an injection not only into $\TT$ but
  also into just $\TT_2$.
  Clearly then $\TT'' \subseteq \TT_2$.
  Also $\pi'' = \pi \restrictedto (\SS'' \cup \TT'') = (\pi \restrictedto (\SS_2 \cup \TT_2)) \restrictedto (\SS'' \cup \TT'') =
  \pi_2 \restrictedto (\SS'' \cup \TT'')$ which is to say, since both $\sigma''$ and $\sigma_2$ select some prefix from the
  beginning of the same $P$, they must have the same structure between shared transitions.
  Hence $P'' \leq P_2$.
  From \reflem{prefix-up-GR} we then obtain a $\sigma_1 \in \Lin(P_2)$ with $\sigma'' \leq \sigma_1$.
  As $\sigma_1 \in \Lin(P_2)$ and $\sigma_2 \in \Lin(P_2)$ we conclude, using Theorem~7.10 from \cite{best87both}, that $\sigma_1 \equiv_0^* \sigma_2$.
\qed
\end{proof}

\noindent
Besides these lemmas, we only need the following ``finitary'' version of \refthm{BD-infty}.

\begin{proposition}\rm\label{pr-BD}
Let $\sigma,\rho\in\fFS$ and $P,Q \in \fGR$ such that
$\sigma \in \Lin(P)$ and $\rho\in\Lin(Q)$.
Then $\sigma \equiv_0^* \rho$ iff $P \equiv_1^* Q$.
\end{proposition}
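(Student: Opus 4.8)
The plan is to prove the two implications separately, using the three preceding lemmas (\reflem{prefix-up-fs}, \reflem{prefix-down-fs}, \reflem{prefix-up-GR}, \reflem{prefix-down-GR}) to move finite prefixes back and forth across the compatibility relation, and to reduce everything to the results from \cite{best87both} that have already been invoked in the lemma proofs. The crucial observation is that since $P,Q\in\fGR$ and $\sigma,\rho\in\fFS$, all four objects are finite, so each swapping chain ($\equiv_0^*$ or $\equiv_1^*$) has only finitely many steps, and there is no need for the infinitary machinery of $\sqsubseteq_0^\infty$ or $\sqsubseteq_1^\infty$.

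\smallskip
\textbf{Direction ``$P \equiv_1^* Q \implies \sigma \equiv_0^* \rho$''.} By induction on the length of a $\swapeqi$-chain from $P$ to $Q$, it suffices to treat a single swap: $Q = \swap(P,p,q)$ (up to isomorphism) for places $p,q\in\SS$ with $\pi(p)=\pi(q)$ and $p,q$ unordered in $\FF^+$. I would show that $\Lin(P)$ and $\Lin(Q)$ are interchangeable at the level of $\equiv_0^*$: given $\sigma\in\Lin(P)$ I want to produce some $\rho_0\in\Lin(Q)$ with $\sigma\equiv_0^*\rho_0$, and then use that any two linearisations of the \emph{same} finite process are $\equiv_0^*$-related (Theorem~7.10 of \cite{best87both}, already used in \reflem{prefix-down-GR}) to conclude $\rho_0\equiv_0^*\rho$. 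The swap only relocates the two arcs leaving $p$ and $q$, so it reorders, at worst, the relative position of the (finitely many) transitions in the downward cones of $p$ and $q$; for a linearisation $\sigma$ of $P$ one can rearrange $\sigma$ by finitely many adjacent transpositions of concurrently-enabled transitions into a linearisation $\rho_0$ of $Q$, since the only causalities that change are exactly those the swap removes/adds, and these involve transitions that were concurrent in whichever of $P,Q$ does not order them. This uses finiteness essentially. Transitivity of $\equiv_0^*$ then closes the induction.

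\smallskip
\textbf{Direction ``$\sigma \equiv_0^* \rho \implies P \equiv_1^* Q$''.} Symmetrically, by induction on the length of an $\adjacent$-chain from $\sigma$ to $\rho$, it suffices to treat one adjacency step: $\sigma = \sigma_1 t u \sigma_2$, $\rho = \sigma_1 u t \sigma_2$ with $M_0\goesto[\sigma_1]\goesto[\{t,u\}]$. Given $P$ with $\sigma\in\Lin(P)$, I would build a process $P_0$ with $\rho\in\Lin(P_0)$ — the underlying net of $P$ is unchanged, only the $\pos$ bijection is altered by transposing the preimages of the two adjacent positions, which is legitimate precisely because those two transitions are unordered in $\FF^+$ (if they were ordered, the step $\{t,u\}$ could not fire, contradicting compatibility) — and then invoke the fact that two finite processes with a common linearisation are $\equiv_1^*$-related, which is the process-side counterpart of Theorem~7.10 of \cite{best87both} (equivalently, follows from \refthm{BD-infty} specialised to finite objects together with the first direction just proved). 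Then $P\equiv_1^* P_0\equiv_1^* Q$ by transitivity.

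\smallskip
\textbf{Main obstacle.} The delicate point is the ``same process, two linearisations'' fact in the form needed here, i.e.\ that for finite $P$ any $\sigma,\sigma'\in\Lin(P)$ satisfy $\sigma\equiv_0^*\sigma'$, and dually that two finite processes sharing a linearisation are $\equiv_1^*$-equivalent. The first is Theorem~7.10 of \cite{best87both} (as cited in the proof of \reflem{prefix-down-GR}); for the second I would either cite the corresponding statement from \cite{best87both} or derive it from \refthm{BD-infty} restricted to finite firing sequences and processes. Once that is in hand, the rest is the routine single-step bookkeeping above, where finiteness guarantees all chains are finite so no limiting argument is required. A secondary care point is checking in the second direction that $P_0$ is genuinely a GR-process with $\rho\in\Lin(P_0)$ — but since the net and mapping are untouched and only $\pos$ is modified by an admissible transposition, both conditions of \refdf{compatible} are immediate.
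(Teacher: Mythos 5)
The paper does not actually prove this proposition from scratch: its ``proof'' is a citation (Theorem~3 of \cite{glabbeek11ipl}, itself an immediate consequence of two theorems of \cite{best87both}), so your attempt to reconstruct the correspondence directly is a different, more ambitious route. Your first direction is essentially repairable: for a single swap $Q=\swap(P,p,q)$ the union of the causal orders of $P$ and $Q$ is acyclic (a cycle would contradict either the acyclicity of $\FF$ or the unorderedness of $p$ and $q$ in $\FF^+$), so the two finite processes have a common linearisation $\tau$, and Theorem~7.10 of \cite{best87both} applied twice gives $\sigma\equiv_0^*\tau\equiv_0^*\rho$; your own justification (``rearrange $\sigma$ by finitely many admissible transpositions into some $\rho_0\in\Lin(Q)$'') glosses over exactly this point, but the claim is true.

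The second direction, however, contains a genuine error. You assert that if $\sigma=\sigma_1tu\sigma_2\in\Lin(P)$ and $M_0\goesto[\sigma_1]\goesto[\{t,u\}]$, then the occurrences of $t$ and $u$ at those positions are unordered in $\FF^+$, ``since otherwise the step could not fire, contradicting compatibility''. This is false: step-enabledness is a statement about the marking of $N$, whereas the ordering in $P$ records \emph{which token} the $u$-occurrence consumes, and the two are independent. Take $M_0=\{r,s\}$, $\precond{t}=\{r\}$, $\postcond{t}=\{s\}$, $\precond{u}=\{s\}$: the step $\{t,u\}$ is enabled at $M_0$, yet the process $P$ in which the $u$-occurrence consumes the token produced by the $t$-occurrence is compatible with $tu$ and orders the two occurrences causally. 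For such a $P$ you cannot obtain $ut\in\Lin(P)$ by merely transposing $\pos$ (any compatible $\pos$ for $ut$ would violate \refdf{compatible}(2)); you must change the process itself, applying the swap transformation to reroute the consumed token to the initially present copy of $s$, obtaining some $P_0\swapeq P$ with $ut\in\Lin(P_0)$. Constructing that $P_0$ is precisely the non-trivial content of the Best--Devillers theorems you are trying to reprove, so as written the plan begs the question at its crucial step. A smaller issue: deriving ``two finite processes with a common linearisation are $\equiv_1^*$-related'' from \refthm{BD-infty} additionally requires that $\equiv_1^\infty$ and $\equiv_1^*$ coincide on finite processes, which is not immediate from \refdf{BD-swapping-alt}; citing the corresponding finitary theorem of \cite{best87both} directly (as the paper in effect does) is the safer option.
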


\begin{proof}
  In \cite{glabbeek11ipl} as Theorem 3 and an immediate conclusion from two theorems of \cite{best87both}.
\end{proof}

\definecolor{lsA}{rgb}{0.0,0.0,0.0}\def\lsA{[color=lsA,text=black,line width=1pt]}
\definecolor{lsB}{rgb}{0.0,0.0,0.6}\def\lsB{[color=lsB,text=black,line width=1pt]}
\definecolor{lsC}{rgb}{0.3,0.0,1.0}\def\lsC{[color=lsC,text=black,line width=1pt]}
\definecolor{lsD}{rgb}{0.7,0.0,1.0}\def\lsD{[color=lsD,text=black,line width=1pt]}
\definecolor{lsE}{rgb}{1.0,0.0,1.0}\def\lsE{[color=lsE,text=black,line width=1pt]}
\definecolor{lsF}{rgb}{1.0,0.0,0.3}\def\lsF{[color=lsF,text=black,line width=1pt]}
\definecolor{lsG}{rgb}{1.0,0.5,0.0}\def\lsG{[color=lsG,text=black,line width=1pt]}
\definecolor{lsH}{rgb}{1.0,0.6,0.0}\def\lsH{[color=lsH,text=black,line width=1pt]}
\definecolor{lsI}{rgb}{1.0,0.8,0.0}\def\lsI{[color=lsI,text=black,line width=1pt]}
\definecolor{lsJ}{rgb}{1.0,1.0,0.0}\def\lsJ{[color=lsJ,text=black,line width=1pt]}

\begin{center}
\begin{tikzpicture}
  \draw (0, 1) node(sxx) {$\sigma''$}
        (0, 2.5) node(s1)  {$\sigma_1$}
        (0, 4) node(sx)  {$\sigma'$}
        (2, 2.5) node(s2)  {$\sigma_2$}
        (2, 4) node(s3)  {$\sigma_3$}
        (6.75, 4) node(rx)  {$\rho'$}
        (6.75, 5.75) node(r)   {$\rho$}
        (3.5, 5.75) node(s)   {$\sigma$}

        (8.25, 6.25) node(Q)   {$Q$}
        (5.0, 6.25) node(P)   {$P$}
        (8.25, 4.5) node(QX)  {$Q'$}
        (1.5, 4.5) node(PX)  {$P'$}
        (1.5, 1.5) node(PXX) {$P''$}
        ;

  \draw \lsG (PXX) -- (PX)             node [sloped,midway,fill=white] {$\leq$} ;
  \draw \lsH (QX)  -- (Q)              node [sloped,midway,fill=white] {$\leq$} ;
  \draw \lsI (PX)  -- (QX)             node [sloped,very near start,fill=white] {$\equiv_1^*$} ;
  \draw \lsJ (P)   -- (Q)              node [sloped,midway,fill=white] {$\sqsubseteq_1^\infty$} ;
  \draw \lsB (s)   -- (r)              node [sloped,near end,fill=white] {$\sqsubseteq_0^\infty\!\!\!$} ;

  \draw \lsD (sxx) -- (s1)             node [sloped,midway,fill=white] {$\leq$} ;
  \draw \lsF (s1)  -- (sx)             node [sloped,midway,fill=white] {$\leq$} ;
  \draw \lsE (s2)  -- (s3)             node [sloped,midway,fill=white] {$\leq$} ;
  \draw \lsE (rx)  -- (r)              node [sloped,midway,fill=white] {$\leq$} ;
  \draw \lsD (s2)  to[out=0,in=270] (3.5,4) -- (s)   node [sloped,near end,fill=white] {$\leq$}  ;
  \draw \lsC (PXX) to[out=0,in=270] (5.0,4.5) -- (P) node [sloped,midway,fill=white] {$\leq$} ;
  \draw \lsD (s1)  -- (s2)             node [sloped,midway,fill=white] {$\equiv_0^*$} ;
  \draw \lsF (sx)  -- (s3)             node [sloped,midway,fill=white] {$\equiv_0^*$} ;
  \draw \lsE (s3)  -- (rx)             node [sloped,midway,fill=white] {$\equiv_0^*$} ;

  \draw \lsD (sxx) -- (PXX) ;
  \draw \lsG (sx)  -- (PX) ;
  \draw \lsH (rx)  -- (QX) ;
  \draw \lsA (s)   -- (P) ;
  \draw \lsA (r)   -- (Q) ;

  \draw (5.5, 2.5) node[text width=5cm, right]
    {Relations are established in the proof in the following order:} ;
  \draw \lsA (5.75,1.75) -- (5.75,1.25) ;
  \draw \lsB (6.00,1.75) -- (6.00,1.25) ;
  \draw \lsC (6.25,1.75) -- (6.25,1.25) ;
  \draw \lsD (6.50,1.75) -- (6.50,1.25) ;
  \draw \lsE (6.75,1.75) -- (6.75,1.25) ;
  \draw \lsF (7.00,1.75) -- (7.00,1.25) ;
  \draw \lsG (7.25,1.75) -- (7.25,1.25) ;
  \draw \lsH (7.50,1.75) -- (7.50,1.25) ;
  \draw \lsI (7.75,1.75) -- (7.75,1.25) ;
  \draw \lsJ (8.00,1.75) -- (8.00,1.25) ;
\end{tikzpicture}
\end{center}

\noindent
\textit{Proof of \refthm{ordering}:}
``Only if'': Suppose $\sigma \sqsubseteq_0^\infty \rho$.
Let $P''\in\fGR$ with $P''\leq P$. It suffices to show that there are
$P',Q'\in\fGR$ with $P'' \leq P' \equiv_1^* Q' \leq Q$.
By \reflem{prefix-down-GR} there are $\sigma''\in\Lin(P'')$
and $\sigma_1,\sigma_2\in\fFS$ with
$\sigma'' \leq \sigma_1 \equiv_0^* \sigma_2 \leq \sigma$.\linebreak[3]
By \refdf{BD-swapping-fs-alt}, using that $\sigma \sqsubseteq_0^\infty \rho$
and $\sigma_2\leq \sigma$, there are $\sigma_3,\rho'\in\fFS$ with
$\sigma_2 \leq \sigma_3 \equiv_0^* \rho' \leq \rho$.
By \refcor{swapprefix-fs}, using that $\sigma_1 \connectedto \sigma_2
\leq \sigma_3$, there is a $\sigma'\in\fFS$ with
$\sigma_1 \leq \sigma' \equiv_0^* \sigma_3$.
Hence $\sigma'' \leq \sigma' \equiv_0^* \rho' \leq \rho$
by the transitivity of $\leq$ and $\equiv_0^*$.
By \reflem{prefix-up-fs}, using that $\sigma'' \leq \sigma'$ and
$\sigma''\in\Lin(P'')$, there is a $P'\mathbin\in\fGR$ with
$\sigma'\in\Lin(P')$ and $P''\leq P'$.
By \reflem{prefix-down-fs}, substituting $Q$, $\rho$ and $\rho'$
for $P$, $\sigma$ and $\sigma''$, and using that $\rho'\leq \rho$
and $\rho\in\Lin(Q)$, there is a $Q'\inp\fGR$ with $\rho'\inp\Lin(Q')$
and $Q'\leq Q$.
By \refpr{BD}, using that $\sigma'\in\Lin(P')$, $\rho'\in\Lin(Q')$ and
$\sigma'\equiv_0^* \rho'$, we conclude $P' \equiv_1^* Q'$.

\begin{center}
\begin{tikzpicture}
  \draw (0, 0.5) node(sxx) {$\sigma''$}
        (0, 2.25) node(s0) {$\sigma_0$}
        (0, 3.75) node(sx) {$\sigma'$}
        (4, 2.25) node(r0) {$\rho_0$}
        (4, 3.75) node(r1) {$\rho_1$}
        (6, 3.75) node(rx) {$\rho'$}
        (1, 5.25) node(s)  {$\sigma$}
        (6, 5.25) node(r)  {$\rho$}

        (1.5,1) node(PXX) {$P''$}
        (1.5,2.75) node(PX) {$P'$}
        (5.5,2.75) node(QX) {$Q'$}
        (2.5,5.75) node(P) {$P$}
        (7.5,5.75) node(Q) {$Q$}
        ;

  \draw \lsF (sxx) -- (s0) node [sloped,midway,fill=white] {$\leq$} ;
  \draw \lsI (s0) -- (sx) node [sloped,midway,fill=white] {$\leq$} ;
  \draw \lsE (PXX) -- (PX) node [sloped,midway,fill=white] {$\leq$} ;
  \draw \lsE (QX) to[out=0,in=270] (7.5,4.5) -- (Q) node [sloped,midway,fill=white] {$\leq$} ;
  \draw \lsG (r0) -- (r1) node [sloped,pos=0.65,fill=white] {$\leq$} ;
  \draw \lsG (rx) -- (r) node [sloped,midway,fill=white] {$\leq$} ;
  \draw \lsC (sxx) to[out=45,in=270] (1,2.5) -- (s) node [sloped,near end,fill=white] {$\leq$} ;
  \draw \lsH (s0) -- (r0) node [sloped,pos=0.8,fill=white] {$\equiv_0^*$} ;
  \draw \lsI (sx) -- (r1) node [sloped,pos=0.43,fill=white] {$\equiv_0^*$} ;
  \draw \lsG (r1) -- (rx) node [sloped,midway,fill=white] {$\equiv_0^*$} ;
  \draw \lsE (PX) -- (QX) node [sloped,pos=0.45,fill=white] {$\equiv_1^*$} ;

  \draw \lsB (P) -- (Q) node [sloped,midway,fill=white] {$\sqsubseteq_1^\infty$} ;
  \draw \lsJ (s) -- (r) node [sloped,near end,fill=white] {$\sqsubseteq_0^\infty$} ;

  \draw \lsD (sxx) -- (PXX) ;
  \draw \lsA (r) -- (Q) ;
  \draw \lsA (s) -- (P) ;
  \draw \lsF (s0) -- (PX) ;
  \draw \lsG (r0) -- (QX) ;
  \draw \lsD (PXX) to[out=45,in=270] (2.5,3) -- (P) node [sloped,pos=0.7,fill=white] {$\leq$} ;

\end{tikzpicture}
\end{center}

``If'': Suppose $P \sqsubseteq_1^\infty Q$.
Let $\sigma''\in\fFS$ with $\sigma''\leq \sigma$. It suffices to show
that there are $\sigma',\rho'\in\fFS$ with
$\sigma'' \leq \sigma' \equiv_0^* \rho' \leq \rho$.
By \reflem{prefix-down-fs}, using that $\sigma''\leq
\sigma\in\Lin(P)$, there is a $P''\in\fGR$ with
$\sigma''\in\Lin(P'')$ and $P''\leq P$.
By \refdf{BD-swapping-alt}, using that $P \sqsubseteq_1^\infty Q$
and $P''\leq P$, there are $P',Q'\in\fGR$ with
$P'' \leq P' \equiv_1^* Q' \leq Q$.
By \reflem{prefix-up-GR}, using that $P''\leq P'$ and
$\sigma''\in\Lin(P'')$, there is a $\sigma_0\mathbin\in\Lin(P')$ with
$\sigma''\leq \sigma_0$.
By \reflem{prefix-down-GR}, substituting $Q'$, $Q$ and $\rho$ for
$P''$, $P$ and $\sigma$, and using that $Q'\leq Q$ and
$\rho\in\Lin(Q)$, there are $\rho_0\mathbin\in\Lin(Q')$ and
$\rho_1,\rho'\in\fFS$ with
$\rho_0 \leq \rho_1 \equiv_0^* \rho' \leq \rho$.
By \refpr{BD}, using that $\sigma_0\in\Lin(P')$, $\rho_0\in\Lin(Q')$ and
$P' \equiv_1^* Q'$, we obtain $\sigma_0 \equiv_0^* \rho_0$.
By \refcor{swapprefix-fs}, using 
$\sigma_0 \connectedto \rho_0 \leq \rho_1$,
there is a $\sigma'\in\fFS$ with
$\sigma_0 \leq \sigma' \equiv_0^* \rho_1$. Hence
$\sigma'' \leq \sigma' \equiv_0^* \rho' \leq \rho$
by the transitivity of $\leq$ and $\equiv_0^*$.
\qed

\section{A countable conflict-free net has a largest process}
\label{sec-uniquemaxprocess}

We now show that a countable conflict-free net has a largest process. As we have
an order-preserving bijection between FS-process or BD-process,
it does not matter which notion of process we use here.
We prove an even stronger result, using binary-conflict-\!-free instead of conflict-free.
In preparation we need the following lemmas.

\begin{lemma}\rm\label{lem-conflictfreeswap}
  Let $N = (S, T, F, M_0)$ be a binary-conflict-\!-free net,
  $\sigma t, \sigma u \inp \FS(N)$ with $\sigma\inp T^*$, $t, u \inp T$, and $t \mathbin{\ne} u$.

  Then $\sigma tu, \sigma ut \in \FS(N)$ and
  $\sigma tu \connectedto \sigma ut$.
\end{lemma}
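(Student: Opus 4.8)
\noindent
The plan is to combine determinism of the token game with the binary-conflict-freeness hypothesis. First I would note that, since $\sigma t\in\FS(N)$ and $\sigma u\in\FS(N)$, firing $\sigma$ from $M_0$ reaches a \emph{unique} marking $M$: the marking after a step is completely determined by the preceding marking and the step fired (\refdf{firing}), and this propagates along $\sigma$. Hence this single $M$ satisfies both $M\goesto[\{t\}]$ and $M\goesto[\{u\}]$, i.e.\ $\precond{t}\subseteq M$ and $\precond{u}\subseteq M$. Since $M$ is reachable ($M_0\goesto[] M$), $t\neq u$, and $N$ is binary-conflict-free, the remark following \refdf{semanticconflict} yields that the step $\{t,u\}$ is enabled at $M$, that is $M\goesto[\{t,u\}]$.

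Next I would invoke the elementary fact recorded just after \refdf{firing}, namely that $M\goesto[\{t,u\}]$ implies $M\goesto[tu]$ and $M\goesto[ut]$ (both in fact leading to the common marking $(M-{}^\bullet\{t,u\})+\{t,u\}^\bullet$, though this is not needed). Prepending the firing sequence $\sigma$ then gives $M_0\goesto[\sigma tu]$ and $M_0\goesto[\sigma ut]$, so $\sigma tu,\sigma ut\in\FS(N)$; this is the first assertion of the lemma.

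For the second assertion I would apply \refdf{connectedto} directly, with the decomposition $\sigma_1:=\sigma$ and $\sigma_2$ the empty word: indeed $\sigma tu=\sigma_1 t u \sigma_2$ and $\sigma ut=\sigma_1 u t \sigma_2$, and $M_0\goesto[\sigma_1]\goesto[\{t,u\}]$ was established above, whence $\sigma tu\adjacent\sigma ut$ and therefore $\sigma tu\connectedto\sigma ut$, since $\adjacent$ is contained in its reflexive and transitive closure $\connectedto$. I do not anticipate any genuine obstacle here; the only subtlety is the appeal to determinism that forces the two hypothesised firing sequences $\sigma t$ and $\sigma u$ through a common intermediate marking $M$, which is precisely what activates the binary-conflict-freeness assumption.
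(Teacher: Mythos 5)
Your proposal is correct and follows essentially the same route as the paper: use determinism of the firing rule to force $\sigma t$ and $\sigma u$ through a common marking $M$, invoke binary-conflict-\!-freeness to enable the step $\{t,u\}$ at $M$, and then read off $\sigma tu,\sigma ut\in\FS(N)$ and their adjacency from \refdf{firing} and \refdf{connectedto}. No gaps.
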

\begin{proof}
  As we have unlabelled transitions, $\sigma$ leads to a unique marking.
  From $M_0 \goesto[\sigma t]{} \wedge M_0 \goesto[\sigma u]$ we thus
  have that an $M_1$ exists with $M_0 \goesto[\sigma]{} M_1 \wedge
  M_1 {\goesto[t]} \wedge M_1 \goesto[u]$. Due to binary-conflict-\!-freeness
  then also $M_1 \goesto[\{t, u\}]$.
  Hence $M_0 \goesto[\sigma]\goesto[\{t, u\}]$, so
  $\sigma tu, \sigma ut \in \FS(N)$ and
  $\sigma tu \connectedto \sigma ut$.
  \qed
\end{proof}

\begin{lemma}\rm\label{lem-closingdiamondwithout}
  Let $N = (S, T, F, M_0)$ be a binary-conflict-\!-free net,
  $\sigma t, \sigma \rho \inp \FS(N)$ with $t \inp T$, $\sigma,\rho\in T^*$,
  and $t \mathbin{\notin} \rho$.

  Then $\sigma t\rho, \sigma\rho t \in \FS(N)$ and
  $\sigma t\rho \connectedto \sigma \rho t$.
\end{lemma}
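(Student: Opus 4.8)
The plan is to induct on the length of $\rho$, peeling off one transition at a time and using \reflem{conflictfreeswap} to commute $t$ past it.

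The base case $\rho = \epsilon$ is immediate: $\sigma t\rho = \sigma\rho t = \sigma t \in \FS(N)$ by hypothesis, and $\sigma t \connectedto \sigma t$ by reflexivity of $\connectedto$.

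For the inductive step I would write $\rho = v\rho'$ with $v\in T$; since $t\notin\rho$ this gives $t\neq v$ and $t\notin\rho'$. From $\sigma v\rho' = \sigma\rho \in \FS(N)$, taking prefixes yields $\sigma v\in\FS(N)$ and $\sigma v\rho'\in\FS(N)$. Applying \reflem{conflictfreeswap} to $\sigma t,\sigma v\in\FS(N)$ (with $t\neq v$) gives $\sigma tv,\sigma vt\in\FS(N)$ and $\sigma tv\connectedto\sigma vt$. Now the induction hypothesis, applied with $\sigma v$ in the role of $\sigma$ and $\rho'$ in the role of $\rho$ — legitimate since $\sigma vt,\sigma v\rho'\in\FS(N)$, $t\notin\rho'$, and $|\rho'|<|\rho|$ — yields $\sigma vt\rho',\sigma v\rho' t\in\FS(N)$ and $\sigma vt\rho'\connectedto\sigma v\rho' t$. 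In particular $\sigma\rho t = \sigma v\rho' t\in\FS(N)$, settling one of the two membership claims.

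To finish I would assemble the chain. The relation $\adjacent$ is symmetric (immediate from \refdf{connectedto}, since the defining condition is symmetric in the two interchanged transitions), so $\connectedto$ is an equivalence and $\sigma vt\connectedto\sigma tv$; appending $\rho'$ and invoking the suffix-closure property recorded after \refdf{connectedto} (using $\sigma vt\rho'\in\FS(N)$) gives $\sigma vt\rho'\connectedto\sigma tv\rho'$, which also shows $\sigma t\rho = \sigma tv\rho'\in\FS(N)$, the remaining membership claim. Transitivity of $\connectedto$ then closes
$$\sigma t\rho = \sigma tv\rho' \connectedto \sigma vt\rho' \connectedto \sigma v\rho' t = \sigma\rho t.$$
I do not anticipate a genuine obstacle here; the only thing to watch is the bookkeeping of which concatenations are already known to lie in $\FS(N)$ whenever a suffix is appended to a pair of $\connectedto$-related sequences, which is handled each time by the suffix-closure property of $\connectedto$.
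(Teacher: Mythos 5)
Your proof is correct and follows essentially the same route as the paper's: induction on $|\rho|$, using \reflem{conflictfreeswap} to commute $t$ past the first transition of $\rho$, then the induction hypothesis (with $\sigma v$ in place of $\sigma$), and finally suffix-closure and transitivity of $\connectedto$ to assemble the chain. The only differences are presentational (you spell out symmetry of $\adjacent$ and the membership bookkeeping, which the paper leaves implicit).
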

\begin{proof}
  Via induction on the length of $\rho$.

  If $\rho \mathbin= \epsilon$, $\sigma t\in\FS(N)$ trivially implies
  $\sigma \epsilon t, \sigma t \epsilon\in\FS(N)$ and
  $\sigma \epsilon t \connectedto \sigma t \epsilon$.

  For the induction step take $\rho \mathbin{:=} u\mu$ (thus $u \ne t$).
  With $\sigma t, \sigma u\in\FS(N)$ and \reflem{conflictfreeswap} also
  $\sigma u t\in\FS(N)$ and $\sigma t u \connectedto \sigma u t$.
  Together with $\sigma u \mu\in\FS(N)$, the induction assumption then gives us
  $\sigma u t \mu \in \FS(N)$ and
  $\sigma u t \mu \connectedto \sigma u \mu t =
  \sigma \rho t$. With $\sigma u t \connectedto \sigma t u$ also
  $\sigma u t \mu \connectedto \sigma t u \mu =
  \sigma t \rho$ and $\sigma\rho t,\,\sigma t \rho\in \FS(N)$.
  \qed
\end{proof}

\begin{lemma}\rm\label{lem-closingdiamondwith}
  Let $N = (S, T, F, M_0)$ be a binary-conflict-\!-free net,
  $\sigma, \rho_1, \rho_2 \in T^*$, $t \in T$, $t \notin \rho_1$.

  If $\sigma t \in\FS(N) \wedge \sigma \rho_1 t \rho_2 \in\FS(N)$
  then $\sigma t \rho_1\rho_2\in\FS(N) \wedge
  \sigma t \rho_1 \rho_2 \connectedto \sigma \rho_1 t \rho_2$.
\end{lemma}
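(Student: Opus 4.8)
The plan is to derive this lemma directly from \reflem{closingdiamondwithout} together with the compatibility of $\connectedto$ with suffixing noted right after \refdf{connectedto}, so essentially no new work about markings or conflicts is needed. First I would observe that $\sigma\rho_1 \in \FS(N)$, since it is a prefix of $\sigma\rho_1 t\rho_2 \in \FS(N)$. Now I have $\sigma t \in \FS(N)$, $\sigma\rho_1 \in \FS(N)$ and $t \notin \rho_1$, which are exactly the hypotheses of \reflem{closingdiamondwithout} (instantiating its $\rho$ with $\rho_1$). Applying that lemma yields $\sigma t\rho_1, \sigma\rho_1 t \in \FS(N)$ and $\sigma t\rho_1 \connectedto \sigma\rho_1 t$; since $\connectedto$ is symmetric (being the reflexive--transitive closure of the symmetric relation $\adjacent$) this also reads $\sigma\rho_1 t \connectedto \sigma t\rho_1$.

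Next I would append $\rho_2$. From $\sigma\rho_1 t \connectedto \sigma t\rho_1$ and $(\sigma\rho_1 t)\rho_2 = \sigma\rho_1 t\rho_2 \in \FS(N)$, the congruence property "$\sigma \connectedto \rho \wedge \sigma\mu \in \FSinf(N) \implies \sigma\mu \connectedto \rho\mu$" (with $\mu := \rho_2$) gives $\sigma\rho_1 t\rho_2 \connectedto \sigma t\rho_1\rho_2$. In particular $\sigma t\rho_1\rho_2$ is a firing sequence, and being a finite word it lies in $\FS(N)$; and by symmetry of $\connectedto$ we get $\sigma t\rho_1\rho_2 \connectedto \sigma\rho_1 t\rho_2$, which is exactly the claim. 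This completes the argument.

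There is no real obstacle here: the only points requiring a moment's care are the direction of $\connectedto$ (harmless, since it is an equivalence relation) and checking that the instantiation of \reflem{closingdiamondwithout} is legitimate, which it is because $\sigma\rho_1$ is available as a prefix of the given firing sequence $\sigma\rho_1 t\rho_2$. Thus the lemma is an immediate corollary of \reflem{closingdiamondwithout} plus closure of $\connectedto$ under right-concatenation.
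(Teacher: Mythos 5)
Your proof is correct and follows essentially the same route as the paper's: apply \reflem{closingdiamondwithout} to $\sigma t$ and $\sigma\rho_1$ (the latter being a prefix of the given firing sequence) to obtain $\sigma t\rho_1 \connectedto \sigma\rho_1 t$, then append $\rho_2$ using the closure of $\connectedto$ under right-concatenation. The extra care you take over symmetry of $\connectedto$ and the membership $\sigma\rho_1 \in \FS(N)$ is fine but adds nothing beyond what the paper's two-line argument already contains.
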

\begin{proof}
  Applying \reflem{closingdiamondwithout} with
  $\sigma t\inp\FS(N) \wedge \sigma \rho_1\inp\FS(N)$ 
  we get $\sigma t \rho_1\inp\FS(N) \wedge
  \sigma t \rho_1 \connectedto \sigma \rho_1 t$.
  Since $\sigma \rho_1 t \rho_2\in\FS(N)$ the latter yields
  $\sigma t \rho_1 \rho_2 \connectedto \sigma \rho_1 t \rho_2$
  and thus $\sigma t \rho_1 \rho_2\in\FS(N)$.
  \qed
\end{proof}

\begin{lemma}\rm\label{lem-closingdiamond}
  Let $N$ be a binary-conflict-\!-free net.
  \vspace{2pt}

  If $\sigma,\sigma'\inp\FS(N)$ then
  $\exists \mu, \mu'. \sigma \mu\inp\FS(N) \wedge
  \sigma' \mu'\in\FS(N) \wedge \sigma\mu \connectedto \sigma'\mu'$.
\end{lemma}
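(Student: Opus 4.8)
The plan is to prove the statement by induction on $|\sigma'|$, keeping $\sigma$ arbitrary, and to invoke \reflem{closingdiamondwithout} and \reflem{closingdiamondwith} --- into which all use of binary-conflict-\!-freeness has already been absorbed --- together with the two facts about $\connectedto$ recorded after \refdf{connectedto}: that $\connectedto$ is symmetric, and that $\sigma \connectedto \rho \wedge \sigma\mu\inp\FSinf(N)$ implies $\sigma\mu \connectedto \rho\mu$.

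For the base case $\sigma' = \epsilon$ I take $\mu := \epsilon$ and $\mu' := \sigma$, so that $\sigma\mu = \sigma = \sigma'\mu'$ and the $\connectedto$-claim is reflexivity. For the inductive step I write $\sigma' = \sigma'' t$ with $\sigma''\inp\FS(N)$ and $t\inp T$, and apply the induction hypothesis to $\sigma$ and $\sigma''$, obtaining $\mu_0,\nu$ with $\sigma\mu_0, \sigma''\nu\inp\FS(N)$ and $\sigma\mu_0 \connectedto \sigma''\nu$. The firing sequences $\sigma'' t$ and $\sigma''\nu$ share the prefix $\sigma''$, and I distinguish whether $t$ occurs in $\nu$. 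If $t\notin\nu$, then \reflem{closingdiamondwithout} applied to $\sigma'' t, \sigma''\nu\inp\FS(N)$ gives $\sigma'' t\nu, \sigma''\nu t\inp\FS(N)$ with $\sigma'' t\nu \connectedto \sigma''\nu t$; appending $t$ to the $\connectedto$-related pair $\sigma''\nu \connectedto \sigma\mu_0$ (legal since $\sigma''\nu t\inp\FS(N)$) yields $\sigma\mu_0 t\inp\FS(N)$ and $\sigma\mu_0 t \connectedto \sigma''\nu t \connectedto \sigma'' t\nu = \sigma'\nu$, so $\mu := \mu_0 t$ and $\mu' := \nu$ work. If $t\in\nu$, I write $\nu = \nu_1 t\nu_2$ with $t\notin\nu_1$; then \reflem{closingdiamondwith} applied to $\sigma'' t\inp\FS(N)$ and $\sigma''\nu_1 t\nu_2 = \sigma''\nu\inp\FS(N)$ gives $\sigma'' t\nu_1\nu_2\inp\FS(N)$ with $\sigma'' t\nu_1\nu_2 \connectedto \sigma''\nu$, hence $\sigma\mu_0 \connectedto \sigma''\nu \connectedto \sigma'' t\nu_1\nu_2 = \sigma'\nu_1\nu_2$, so $\mu := \mu_0$ and $\mu' := \nu_1\nu_2$ work.

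The genuine content of the lemma --- that in a binary-conflict-\!-free net two concurrently enabled alternatives are merely delayed rather than cancelled, so that the diamond actually closes instead of the two reachable markings only being mutually reachable --- is already settled inside Lemmas~\ref{lem-conflictfreeswap}--\ref{lem-closingdiamondwith}. Hence the only points needing care above are bookkeeping: keeping track of which strings are firing sequences from $M_0$, and checking that enabledness and $\connectedto$-relatedness transfer from $\sigma''\nu$ to $\sigma\mu_0$ via the closure properties of $\connectedto$; I do not anticipate any further obstacle.
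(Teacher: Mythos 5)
Your proof is correct and follows essentially the same route as the paper's: induction on the length of one of the two firing sequences, with the induction step split according to whether the new transition $t$ occurs in the extension supplied by the induction hypothesis, discharged via \reflem{closingdiamondwithout} and \reflem{closingdiamondwith} and the appending property of $\connectedto$. The only difference is that you induct on $\sigma'$ rather than $\sigma$, which by the symmetry of $\connectedto$ is just the mirror image of the paper's argument.
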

\begin{proof}
  Via induction on the length of $\sigma$.

  If $\sigma = \epsilon$ we take $\mu = \sigma'$ and $\mu' = \epsilon$.

  For the induction step we start with
  $$\sigma, \sigma'\in\FS(N) \implies
  \exists \mu, \mu'. \sigma\mu\in\FS(N) \wedge \sigma'\mu'\in\FS(N)
  \wedge \sigma\mu \connectedto \sigma'\mu'$$ and need to show that
  $$\sigma t, \sigma'\in\FS(N) \implies
  \exists \bar{\mu}, \bar{\mu}'.
  \sigma t\bar{\mu}\in\FS(N) \wedge \sigma'\bar{\mu}'\in\FS(N)
  \wedge \sigma t\bar{\mu} \connectedto \sigma'\bar{\mu}'\trail{.}$$

  If $t \inp \mu$, $\mu$ must be of the form $\mu_1 t \mu_2$ with $t \notin \mu_1$.
  We then take $\bar{\mu} := \mu_1 \mu_2$ and $\bar{\mu}' := \mu'$.
  By \reflem{closingdiamondwith} we find
  $\sigma t \mu_1 \mu_2\in\FS(N)$, i.e.\
  $\sigma t \bar{\mu}\in\FS(N)$.
  By the induction assumption
  $\sigma' \bar{\mu}'\in\FS(N)$.
  Per \reflem{closingdiamondwith} $\sigma t \bar{\mu} =
  \sigma t \mu_1\mu_2 \connectedto \sigma \mu_1 t \mu_2 =
  \sigma \mu$. From the induction assumption we obtain
  $\sigma \mu \connectedto \sigma' \mu' = \sigma' \bar{\mu}'$.

  If $t \mathbin{\notin} \mu$, we take $\bar{\mu} := \mu$ and $\bar{\mu}' := \mu't$.
  By \reflem{closingdiamondwithout} we find that
  $\sigma t \mu, \sigma \mu t\in\FS(N)$, i.e.\ also
  $\sigma t \bar{\mu}\in\FS(N)$. From $\sigma \mu t\in\FS(N)$
  and $\sigma \mu \connectedto \sigma' \mu'$ follows that
  $\sigma' \mu' t\in\FS(N)$, i.e.\ $\sigma' \bar{\mu}'\in\FS(N)$.
  Also by \reflem{closingdiamondwithout} we find
  $\sigma t \bar{\mu} = \sigma t \mu \connectedto \sigma \mu t$.
  From the induction assumption  we obtain
  $\sigma \mu t \connectedto \sigma' \mu' t = \sigma' \bar{\mu}'$.
  \qed
\end{proof}

\begin{theorem}\rm\label{thm-fsprocessexists}
  Let $N=(S,T,F,M_0)$
  be a countable, binary-conflict-\!-free net.
  
  Then $N$ has a {$\sqsubseteq_0^\infty$}-largest FS-process.
\end{theorem}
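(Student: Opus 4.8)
\textit{Proof proposal.}
The plan is to build, by diagonalising over all finite firing sequences, a single (possibly infinite) firing sequence $\rho$ that dominates every firing sequence of $N$ with respect to $\sqsubseteq_0^\infty$; its $\equiv_0^\infty$-equivalence class is then the required largest FS-process. Here the hypothesis on $N$ is used in the form: since $N$ has only countably many transitions, $T^*$, and hence $\fFS$, is countable. So fix an enumeration $\tau_0,\tau_1,\tau_2,\dots$ of $\fFS$ (with repetitions if that set happens to be finite).

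Next I would construct an ascending chain $\epsilon=\rho_{-1}\le\rho_0\le\rho_1\le\cdots$ of finite firing sequences. Given $\rho_{n-1}\in\fFS$, apply \reflem{closingdiamond} to the pair $\rho_{n-1},\tau_n\in\FS(N)$ to obtain finite words $\mu,\mu'$ with $\rho_{n-1}\mu,\tau_n\mu'\in\FS(N)$ and $\rho_{n-1}\mu\connectedto\tau_n\mu'$, and set $\rho_n:=\rho_{n-1}\mu$. Then $\rho_{n-1}\le\rho_n\in\fFS$ and, since $\connectedto$ is by definition $\equiv_0^*$, we have $\tau_n\le\tau_n\mu'\equiv_0^*\rho_n$; thus $\tau_n$ is ``absorbed'' into $\rho_n$ in exactly the sense of \refdf{BD-swapping-fs-alt}, with witnesses $\sigma':=\tau_n\mu'$ and $\rho':=\rho_n$.

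Now let $\rho$ be the least upper bound of the chain $(\rho_n)$ in the prefix order: the eventual value if the chain stabilises, otherwise the unique infinite word having every $\rho_n$ as a prefix. Every finite prefix of $\rho$ is a prefix of some $\rho_n$, hence a finite firing sequence, and running through the intermediate markings then yields $\rho\in\FSinf(N)$. To conclude, take any $\sigma\in\FSinf(N)$ and any finite prefix $\sigma''\le\sigma$; then $\sigma''\in\fFS$, so $\sigma''=\tau_n$ for some $n$, and $\sigma''\le\tau_n\mu'\equiv_0^*\rho_n\le\rho$ (using $\rho_n\le\rho$). By \refdf{BD-swapping-fs-alt} this is precisely $\sigma\sqsubseteq_0^\infty\rho$. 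Since an FS-process is an $\equiv_0^\infty$-equivalence class and $\equiv_0^\infty$ is the kernel of the preorder $\sqsubseteq_0^\infty$ (\refpr{swapprefix-fs-preorder}), the class of $\rho$ is the $\sqsubseteq_0^\infty$-largest FS-process.

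The only genuinely non-routine ingredient is \reflem{closingdiamond}, assembled from the closing-diamond lemmas \reflem{conflictfreeswap}, \reflem{closingdiamondwithout} and \reflem{closingdiamondwith} --- and the sole place binary-conflict-\!-freeness enters --- but that lemma is already available. Granting it, the remaining work is pure bookkeeping: checking that the chain is ascending, that its limit is a bona fide element of $\FSinf(N)$, and that countability of $\fFS$ underwrites the enumeration, which is exactly where the restriction to countable nets is needed (matching the footnoted counterexample in the uncountable case).
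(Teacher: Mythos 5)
Your proposal is correct and follows essentially the same route as the paper's own proof: enumerate the countable set $\fFS$, absorb each element into a growing chain of finite firing sequences via \reflem{closingdiamond} (the only place binary-conflict-freeness is used), and take the limit, which dominates everything by \refdf{BD-swapping-fs-alt}. The only differences are cosmetic (starting the chain at $\epsilon$ rather than at the first enumerated sequence).
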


\begin{proof}
  Since $N$ is countable, so is the set $\fFS$ of its finite firing sequences.
  Enumerate its elements as $\sigma_1$, $\sigma_2$, $\ldots$.
  
  By induction, we will construct two sequences $\rho_1, \rho_2, \ldots$ and $\sigma'_1$, $\sigma'_2$, $\ldots$
  of finite firing sequences, such that, for all $i>0$,
  (1)  $\rho_i \leq \rho_{i+1}$, and
  (2)  $\sigma_i \leq \sigma'_i  \equiv^*_0 \rho_i$.\linebreak
  Now let $\rho \in \FS^\infty(N)$ be the limit of all the $\rho_i$.
  As $\sqsubseteq^\infty_0$ is defined in terms of finite prefixes,
  $\sigma \sqsubseteq^\infty_0 \rho$ for any $\sigma \in \FS^\infty(N)$,
  so that $\rho$ is the {$\sqsubseteq_0^\infty$}-largest FS-process of $N$.
  \vspace{1ex}

  \noindent
  \emph{Induction base:} Take $\rho_1 := \sigma'_1 := \sigma_1$.
  \vspace{1ex}

  \noindent
  \emph{Induction step:} Given $\rho_i$,
  by \reflem{closingdiamond} there are $\mu,\mu' \in T^*$ such that $\rho_i \mu$ and
  $\sigma_{i+1} \mu'\in\FS(N)$ and $\rho_i\mu \connectedto \sigma_{i+1}\mu'$.
  Take $\rho_{i+1} = \rho_i \mu$ and $\sigma'_{i+1} := \sigma_{i+1}\mu'$.
  \qed
\end{proof}

\begin{corollary}\rm\label{cor-onemaximalprocess1}
  A countable and binary-conflict-\!-free net $N$ has exactly one \linebreak $\sqsubseteq^\infty_1$-largest
  BD-process.
\end{corollary}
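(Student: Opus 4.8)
\noindent
\textit{Proof proposal.}
The plan is to transport the $\sqsubseteq_0^\infty$-largest FS-process supplied by \refthm{fsprocessexists} across the order-preserving bijection of \refthm{ordering}. Concretely: by \refthm{fsprocessexists} the countable, binary-conflict-\!-free net $N$ has a $\sqsubseteq_0^\infty$-largest FS-process; I would pick a representative $\rho\in\wFS$, so that $\sigma\sqsubseteq_0^\infty\rho$ for every $\sigma\in\wFS$. By \refpr{countable} there is a GR-process $Q\in\GR$ with $\rho\in\Lin(Q)$, and $Q$ is countable since it has exactly as many transitions as $\rho$ has transition occurrences. I then claim that the BD-process $[Q]$, the $\equiv_1^\infty$-class of $Q$, is the unique $\sqsubseteq_1^\infty$-largest BD-process of $N$.

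For largeness, let $P\in\GR$ be arbitrary; I need $P\sqsubseteq_1^\infty Q$, and for this I must feed $P$ into \refthm{ordering}, which requires a compatible firing sequence. The crucial observation is that every GR-process of a countable net is countable: the minimal places of the process net are at most countably many (since $M_0$ has at most countably many tokens and $\pi(\MM_0)=M_0$); since $|\postcond s|\le1$ for every place $s$, distinct transitions have disjoint finite presets, so at any fixed causal depth there are only countably many transitions, each creating finitely many further places; and by the last clause of \refdf{process} every transition has finitely many $\FF^+$-predecessors and hence sits at a finite depth, so $\TT$ is a countable union of countable sets. Hence by \refpr{countable} there is $\sigma\in\FSinf(N)$ with $\sigma\in\Lin(P)$. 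As $\rho$ represents the $\sqsubseteq_0^\infty$-largest FS-process, $\sigma\sqsubseteq_0^\infty\rho$, and \refthm{ordering}, applied to $\sigma\in\Lin(P)$ and $\rho\in\Lin(Q)$, gives $P\sqsubseteq_1^\infty Q$. Thus $[P]\sqsubseteq_1^\infty[Q]$ for every BD-process $[P]$.

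Uniqueness is then immediate: $\sqsubseteq_1^\infty$ is a partial order on BD-processes (its kernel $\equiv_1^\infty$ having been divided out), so two $\sqsubseteq_1^\infty$-largest elements are mutually below each other and therefore equal. If one prefers to avoid spelling out the countability step, the same conclusion follows by quoting directly the consequence already recorded just before \refthm{ordering}---that a countable net has a largest BD-process iff it has a largest FS-process---so that only \refthm{fsprocessexists} and this uniqueness remark remain. I expect the only genuine obstacle to be precisely that countability-of-processes point in the middle paragraph, since it is what lets \refthm{ordering} and \refpr{countable} cover \emph{all} BD-processes and not merely the countable ones; the rest is bookkeeping with the preorders $\sqsubseteq_0^\infty$ and $\sqsubseteq_1^\infty$.
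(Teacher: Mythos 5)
Your proposal is correct and follows essentially the same route as the paper's proof: obtain the $\sqsubseteq_0^\infty$-largest FS-process from \refthm{fsprocessexists}, lift a representative to a GR-process via \refpr{countable}, and then, for an arbitrary GR-process, use its countability together with \refpr{countable} and \refthm{ordering} to conclude it lies below that process. The paper simply asserts ``as $N$ is countable, so is $Q$'' where you spell out the causal-depth argument (your premise that $M_0$ has countably many tokens is not literally guaranteed, but the tokens on isolated places are harmless, so your conclusion stands); otherwise the two proofs coincide.
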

\begin{proof}
  By \refthm{fsprocessexists}, $N$ has a $\sqsubseteq^\infty_0$-largest FS-process.
  Take any representative firing sequence $\sigma$ thereof.
  By \refpr{countable} there is a $P \mathbin\in \GR$ with $\sigma \mathbin\in \Lin(P)$.

  Now take any $Q \in \GR$.
  As $N$ is countable, so is $Q$.
  From \refpr{countable} thus exists $\rho \in \Lin(Q)$.
  As $\sigma$ comes from the largest FS-process, $\rho \sqsubseteq^\infty_0 \sigma$.
  By \refthm{ordering} then $Q \sqsubseteq^\infty_1 P$.

  Thus $P$ is a representative of the largest BD-process of $N$.
  \qed
\end{proof}

\noindent
\refcor{onemaximalprocess1} does not hold for uncountable nets,
as witnessed by the counterexample in \reffig{uncountable}.
This binary-conflict-\!-free net $N$ has a transition $t$ for each real number $t\inp\bbbr$.
Each such transition has a private preplace $s_t$ with $M_0(s_t)=1$
and $F(s_t,t)=1$, which ensures that $t$ can fire only
once. Furthermore there is one shared place $s$ with $M_0(s)=2$ and a
loop $F(s,t)=F(t,s)=1$ for each transition $t$. There are no other
places, transitions or arcs besides the ones mentioned above.

Each GR-process of $N$, and hence also each BD-process $P$, has only
countably many transitions. Moreover, any two GR-processes firing the same
countable set of transitions of $N$ are swapping equivalent. Thus a BD-process is fully determined
by a countable set of reals, and the $\sqsubseteq^\infty_1$-order between BD-processes corresponds with set-inclusion.
It follows that $N$ does not have a  $\sqsubseteq^\infty_1$-largest BD-process.

\begin{figure}
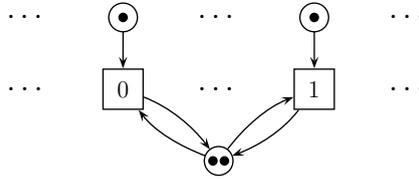

\vspace{-2ex}
  \begin{center}
    \psscalebox{0.9}{
    \begin{petrinet}(8,4)
      \P(2,3.5):p0;
      \P(6,3.5):p1;

      \p(4,0.5):pall;
      \pscircle*(3.88,0.5){0.1}
      \pscircle*(4.12,0.5){0.1}

      \t(2,2):t0:$0$;
      \t(6,2):t1:$1$;

      \a p0->t0;
      \a p1->t1;

      \A pall->t0; \A t0->pall;
      \A pall->t1; \A t1->pall;
      
      \rput(0,3.5){\psscalebox{2}{$\cdots$}}
      \rput(8,3.5){\psscalebox{2}{$\cdots$}}
      \rput(4,3.5){\psscalebox{2}{$\cdots$}}
      \rput(0,2.0){\psscalebox{2}{$\cdots$}}
      \rput(8,2.0){\psscalebox{2}{$\cdots$}}
      \rput(4,2.0){\psscalebox{2}{$\cdots$}}
    \end{petrinet}
    }
  \end{center}
  \vspace{-3ex}
  \caption{A net without a  $\sqsubseteq^\infty_1$-largest BD-process.}
  \label{fig-uncountable}
  \vspace{-2ex}
\end{figure}

\section{Conclusion}

Best and Devillers \cite{best87both} established a bijective correspondence between
BD-processes and FS-processes (our terminology) of countable place/transition systems.
A BD-process is an equivalence class of Goltz-Reisig processes under the notion of swapping
equivalence proposed in \cite{best87both}.
An FS-process is an equivalence class of firing sequences under a related notion of
equivalence also proposed in \cite{best87both}.
Here we considered natural partial orders on BD-processes as well as on FS-processes, and showed that
the bijective correspondence between BD- and FS-processes preserves these orders, and hence the
notion of a largest process.

Moreover, we showed that a countable place/transition system without binary conflicts has a largest
FS-process, and hence a largest BD-process.
By means of a counterexample we indicated that this result does not extend to uncountable nets.

We showed in \cite{glabbeek11ipl} that the reverse direction, that a place/transition system with a largest BD-process is binary-conflict-\!-free,
holds for a large class of Petri nets, called structural
conflict nets, which include the safe nets. The example from \reffig{badswapping} shows it does not
hold for arbitrary countable place/transition systems. This system has a largest BD-process but
does have a binary conflict: after the $a$-transition, both $b$ and and $c$ are possible, but
the step $\{b,c\}$ is not.

The question whether an uncountable net without (any) conflict always has a
largest BD-process is left open.

\bibliographystyle{eptcsalpha}

\end{document}